\newtheorem{theorem}{Theorem}
\newtheorem{lemma}{Lemma}
\newtheorem{fact}{Fact}
\newtheorem{example}{Example}
\title{Parameterized Complexity of Manipulating Sequential Allocation}
\date{Gran Sasso Science Institute (GSSI), L'Aquila, Italy\\firstname.lastname@gssi.it}
\author{Michele Flammini and Hugo Gilbert}
\newcommand{\AgentSet}{\mathcal{A}}
\newcommand{\ItemSet}{\mathcal{I}}
\newcommand{\PickingSequence}{\pi}
\newcommand{\AllocationFunction}{\phi}
\newcommand{\MSA}{\mathit{ManipSA}}
\newcommand{\rk}{\mathtt{rk}}
\newcommand{\rg}{\mathtt{rg}}
\newcommand\Tau{\mathcal{T}}
\newcommand\DPG{G_{\mathtt{dp}}}
\newcommand\DPV{V_{\mathtt{dp}}}
\newcommand\DPA{A_{\mathtt{dp}}}
\newcommand\DPS{\mathcal{S}_{\mathtt{dp}}}
\begin{document}

\maketitle
\begin{abstract}
    The sequential allocation protocol is a simple and popular mechanism to allocate indivisible goods, in which the agents take turns to pick the items according to a predefined sequence. While this protocol is not strategy-proof, it has been shown recently that finding a successful manipulation for an agent is an NP-hard problem~\cite{aziz2017complexity}. Conversely, it is also known that finding an optimal manipulation can be solved in polynomial time in a few cases: if there are only two agents or if the manipulator has a binary or a lexicographic utility function. 
    In this work, we take a parameterized approach to provide several new complexity results on this manipulation problem. More precisely, we give a complete picture of its parameterized complexity  w.r.t. the following three parameters: the number $n$ of agents, the number $\mu(a_1)$ of times the manipulator $a_1$ picks in the picking sequence, and the maximum range $\rg^{\max}$ of an item. This third parameter is a correlation measure on the preference rankings of the agents. 
    In particular, we show that the problem of finding an optimal manipulation can be solved in polynomial time if $n$ or $\mu(a_1)$ is a constant, and that it is fixed-parameter tractable w.r.t. $\rg^{\max}$ and $n+\mu(a_1)$.  Interestingly enough, we show that w.r.t. the single parameters $n$ and $\mu(a_1)$ it is W[1]-hard.
    Moreover, we provide an integer program and a dynamic programming scheme to solve the manipulation problem and we show that a single manipulator can increase the utility of her bundle by a multiplicative factor which is at most 2.

\end{abstract}
\section{Introduction}

Allocating resources to a set of agents in an efficient and fair manner is one of the most fundamental problems in computational social choice. 
One challenging case is the allocation of indivisible items~\cite{bouveret2016fair,brams2003fair,lipton2004approximately}, e.g., allocating players to teams. 
To address this problem, the sequential allocation mechanism has lately received increasing attention in the AI literature~\cite{aziz2015possible,aziz2016welfare,bouveret2011general,kalinowski2013social,kalinowski2013strategic,levine2012make}. 
This mechanism works as follows: at each time step an agent, selected according to a predefined sequence, is allowed to pick one item among the remaining ones.
Such a protocol has many desirable qualities: it is simple, it can be run both in a centralized and in a decentralized way, and agents do not have to submit cardinal utilities.  
For these reasons, sequential allocation is used in several real life applications, as for instance by several professional sports associations \cite{brams1979prisoners} to organize their draft systems (e.g., the annual draft of the National Basketball Association in the US), and by the Harvard Business School to allocate courses to students \cite{budish2012multi}.

Unfortunately, it is well known that the sequential allocation protocol is not strategy-proof. Stated otherwise, an agent can obtain a better allocation by not obeying her preferences, and this might lead to unfair allocations for the truthful agents \cite{bouveret2011general}. Such a drawback has motivated the algorithmic study of several issues related to strategic behaviors in the sequential allocation setting, the most important one being the computation of a ``successful'' manipulation. Hopefully, if finding a successful manipulation is computationally too difficult, then agents may be inclined to behave truthfully~\cite{walsh2016strategic}.

\paragraph{Related work on strategic behaviors in the sequential allocation setting.}
Aziz et al.~\cite{aziz2017equilibria} studied the sequential allocation setting by treating it as a one shot game. They notably designed a linear-time algorithm to compute a pure Nash equilibrium and a polynomial-time algorithm to compute an optimal Stackelberg strategy when there are two agents, a leader and a follower, and the follower has a constant number of distinct values for the items. If the sequential allocation setting is seen as finite repeated game with perfect information, Kalinowski et al.~\cite{kalinowski2013strategic} showed that the unique subgame perfect Nash equilibrium can be computed in linear time when there are two players. However, with more agents, the authors showed that computing one of the possibly exponentially many equilibria is PSPACE-hard.

Several papers focused on the complexity of finding a successful manipulation for a given agent, also called manipulator. 
Bouveret and Lang~\cite{bouveret2011general} showed that determining if there exists a strategy for the manipulator to get a specific set of items can be done in polynomial time. Moreover, Aziz et al.~\cite{aziz2017complexity} showed that finding if there exists a successful manipulation, whatever the utility function of the manipulator, is a polynomial-time problem. Conversely, the same authors showed that determining an optimal manipulation (for a specific utility function) is an NP-hard problem. Bouveret and Lang~\cite{bouveret2014manipulating} provided further hardness results for finding an optimal manipulation for the cases of non-additive preferences and coalitions of manipulators. 
On the other hand, finding an optimal manipulation can be performed in polynomial time if the manipulator has lexicographic or binary utilities~\cite{aziz2017complexity,bouveret2014manipulating} or if there are only two agents~\cite{bouveret2014manipulating}.  Tominaga et al.~\cite{tominaga2016manipulations} further showed that finding an optimal manipulation is a polynomial-time problem when there are only two agents and the picking sequence is composed of a sequence of randomly generated rounds. More precisely, in each round, both agents get to pick one item and a coin flip determines who picks first.

\paragraph{Our contribution.} We tackle the parameterized complexity of manipulating sequential allocations, and provide a complete picture of the problem w.r.t. the following three parameters: the number $n$ of agents, the number $\mu(a_1)$ of items the manipulator gets to pick in the allocation process, and the maximum range $\rg^{\max}$ of an item, a parameter measuring how close the preference rankings of the agents are. In particular, using a novel dynamic programming algorithm, we show that the problem is in XP with respect to $n$, that is it can be solved in polynomial time if $n$ is constant, and it is  Fixed-Parameter Tractable (FPT) w.r.t. $\rg^{\max}$. Moreover, we show that it is in XP w.r.t. $\mu(a_1)$ and FPT w.r.t. to the sum $n+\mu(a_1)$. Interestingly enough, we prove that the problem is W[1]-hard w.r.t. to the single parameters $n$ and $\mu(a_1)$. As a consequence, our XP results are both tight. Table~\ref{tab:sumup} summarizes our results. Lastly, we provide an integer programming formulation of the problem and show that the manipulator cannot increase the utility of her bundle by a multiplicative factor greater than or equal to 2, this bound being tight.          

\begin{table}[h]
\caption{\label{tab:sumup} \small Our parameterized complexity results on the problem of manipulating sequential allocations.}
    \centering\begin{tabular}{|c|c|c|c|c|}
          \hline
        Parameter &    $n$ & $\mu(a_1)$ & $n + \mu(a_1)$ &  $\rg^{\max}$  \\
               \hline
          Results & In XP and W[1]-hard & In XP and W[1]-hard & In FPT  & In FPT\\
                & Theorems~\ref{theorem : n} and \ref{theorem: hardness on n} & Theorem~\ref{theorem: hardness} & Theorem~\ref{theorem n + mu} & Theorem~\ref{theorem : rg}\\
         \hline
    \end{tabular}
\end{table}

Two results presented in this paper (Theorems \ref{theorem : n} and \ref{theorem : bound}) have independently been found by Xiao and Ling~\cite{DBLP:journals/corr/abs-1909-06747}. Indeed, they present another XP algorithm for parameter $n$ as well as the same bound on the increase in utility that a manipulator can obtain. Interestingly, the proofs and the insights on the picking sequence allocation process that are used are quite different. 
\section{Setting and Notations}
We consider a set \(\AgentSet = \{a_1,\ldots,a_n\}\) of \(n\) agents and a set \(\ItemSet = \{i_1,\ldots,i_m\}\) of \(m\) items. A preference profile \(P = \{\succ_{a_1}, \ldots,\succ_{a_n}\}\) describes the preferences of the agents. 
More precisely, \(P\) is a collection of rankings such that ranking \(\succ_a\) specifies the preferences of agent \(a\) over the items in \(\ItemSet\). The items are allocated to the agents according to the following sequential allocation procedure: at each time step, a picking sequence \(\PickingSequence \in \AgentSet^m\) specifies an agent who gets to pick an item within the remaining ones. Put another way, \(\PickingSequence(1)\) picks first, then \(\PickingSequence(2)\) picks second, and so forth. We assume that agents behave greedily by choosing at each time step their preferred item within the remaining ones. If we view sequential allocation as a centralized protocol, then all agents report their preference rankings to a central authority which mimics this picking process. In the following, w.l.o.g. we use this centralized viewpoint where agents have to report their preference rankings. This sequential process leads to an allocation that we denote by \(\AllocationFunction\). More formally, \(\AllocationFunction\) is a function such that \(\AllocationFunction(a)\) is the set of items that agent \(a\) has obtained at the end of the sequential allocation process. 

\begin{example}[Adapted from Example 1 in~\cite{aziz2017complexity}] \label{example:running}
For the sake of illustration, we consider an instance with 3 agents and 4 items, i.e., \(\AgentSet = \{a_1 , a_2 , a_3\}\) and \(\ItemSet = \{i_1, i_2, i_3, i_4\}\). The preferences of the agents 
are described by the following 
profile:
\begin{align*}
    a_1 &: i_1 \succ i_2 \succ i_3 \succ i_4\\
    a_2 &: i_3 \succ i_4 \succ i_1 \succ i_2\\
    a_3 &: i_1 \succ i_2 \succ i_3 \succ i_4
\end{align*}
and the picking sequence is \(\PickingSequence = (a_1,  a_2,  a_3, a_1)\). Then, \(a_1\) will first pick \(i_1\), then \(a_2\) will pick \(i_3\),  \(a_3\) will pick \(i_2\), and lastly \(a_1\) will pick \(i_4\). Hence, the resulting allocation is given by \(\AllocationFunction(a_1) = \{i_1, i_4\}\), \(\AllocationFunction(a_2) = \{i_3\}\) and \(\AllocationFunction(a_3) = \{i_2\}\).
\end{example}

The allocation \(\AllocationFunction\) is completely determined by the picking sequence \(\PickingSequence\) and the preference profile \(P\). Notably, if one of the agents reports a different preference ranking, she may obtain a different set of items. This different set may even be more desirable to her. Consequently, agents may have an incentive to misreport their preferences.

\begin{example}[Example \ref{example:running} continued]
Assume now that agent \(a_1\) reports the preference ranking \(i_3 \succ i_2 \succ i_1 \succ i_4\). Then she obtains the set of items \(\AllocationFunction(a_1) = \{i_2, i_3\}\). This set of items may be more desirable to \(a_1\) than \(\{i_1, i_4\}\) if for instance items \(i_1\), \(i_2\) and \(i_3\) are almost as desirable as one another, but are all three much more desirable than \(i_4\). 
\end{example}

In this work, we study this type of manipulation. We will assume that agent \(a_1\) is the manipulator and that all other agents behave truthfully. Although the agents are asked to report ordinal preferences, we will assume a standard assumption in the literature that \(a_1\) has underlying additive utilities for the items. More formally, the preferences of \(a_1\) over items in \(\ItemSet\) are described by a set of positive values \(U = \{u(i)|i \in \ItemSet\}\) such that \(i \succ_{a_1} j\) implies \(u(i) > u(j)\). The utility of a set of items \(S\) is then obtained by summation, i.e., \(u(S) = \sum_{i\in S} u(i)\). 
We will denote by \(\succ_T\) the truthful preference ranking of \(a_1\), and by \(\AllocationFunction_{\succ}\) the allocation obtained if agent \(a_1\) reports the preference ranking \(\succ\). Moreover, we will denote by \(u_T\) the value \(u(\AllocationFunction_{\succ_T}(a_1))\) which is the utility value of \(a_1\)'s allocation when she behaves truthfully.  We will say that a preference ranking \(\succ\) is a successful manipulation if \(a_1\) prefers \(\AllocationFunction_{\succ}(a_1)\) to \(\AllocationFunction_{\succ_T}(a_1)\), i.e., if \(u(\AllocationFunction_{\succ}(a_1)) > u_T\).  
Hence, the objective for the manipulator is to find a successful manipulation \(\succ\) maximizing \(u(\AllocationFunction_{\succ}(a_1))\). We are now ready to define formally the problem of \textbf{M}anipulating a \textbf{S}equential \textbf{A}llocation process, called \(\MSA\).  
\begin{cproblem}{\(\MSA\)}
Input: A set \(\AgentSet = \{a_1, a_2, \ldots, a_n\}\) of \(n\) agents where \(a_1\) is the manipulator, a set \(\ItemSet = \{i_1,\ldots,i_m\}\) of \(m\) items, a picking sequence \(\PickingSequence\), a preference profile \(P\) and a set \(U\) of utility values for \(a_1\).

Find: A preference ranking \(\succ\) maximizing \(u(\AllocationFunction_{\succ}(a_1))\).
\end{cproblem}

The \(\MSA\) problem is known to be \(\mathit{NP}\)-hard \cite{aziz2017complexity}. In this work, we will address this optimization problem from a parameterized complexity point of view. We will be mostly interested in three types of parameters: the number of agents, the  number of items that an agent gets to pick, and the range of the items. While the number of agents \(n\) is already clear, let us define the other parameters more formally. We denote by \(\mu(a)\) the number of items that agent \(a\) gets to choose in \(\PickingSequence\) and by \(\mu^{\mathtt{max}}\) the maximum of these values, i.e., \(\mu^{\mathtt{max}} = \max\{\mu(a) | a\in \AgentSet\}\). Let \(\rk_a(i)\) denote the rank of item \(i\) in the preference ranking of agent \(a\). Then, we define the range \(\rg(i)\) of an item \(i\) as: 
\[\rg(i) = \max_{a\in \AgentSet\setminus\{a_1\}} \rk_a(i) - \min_{a\in \AgentSet\setminus\{a_1\}} \rk_a(i)  + 1.\notag \] 
Note that we define the range of an item using only non-manipulators. The maximum range of an item \(\rg^{\mathtt{max}}\) is then defined as \(\max_{i\in \ItemSet} \rg(i)\). 

Let us give some intuitions on parameters $\mu(a_1)$ and \(\rg^{\mathtt{max}}\). 
In the $\MSA$ problem, $\mu(a_1)$ can be seen as a budget parameter for the manipulator. Intuitively, the larger the value of $\mu(a_1)$, the more she can manipulate. It is also the size of a feasible solution, i.e., the size of the bundle \(a_1\) will get. Interestingly, in real-life applications, $\mu(a_1)$ can be much smaller than $|\ItemSet|$ and even much smaller than $|\AgentSet|$. For these reasons, $\mu(a_1)$ is an interesting parameter to study in the $\MSA$ problem. 
On the other hand, parameter \(\rg^{\mathtt{max}}\) measures the correlation between the preferences of the non-manipulators. If \(\rg^{\mathtt{max}}=1\) (its minimal possible value), then all non-manipulators have the same preference ranking. In this case, the manipulation problem becomes easy to solve, as all non-manipulators can be treated as a single agent and the case of two agents is known to be polynomial-time solvable. This simple insight can let us hope that the manipulation problem remains tractable if this parameter is small. An important motivation behind parameter \(\rg^{\mathtt{max}}\) is that, in practice, the preferences of different agents are often correlated.

\section{Positive Parameterized Complexity Results on the $\MSA$ Problem}\label{Sect:positive}
To solve the \(\MSA\) problem, one can simply try the \(m!\) possible preference rankings and see which ones yield the maximum utility value. However, a more clever approach uses the following result.

\begin{fact} [From Propositions 7 and 8 in \cite{bouveret2011general}]
Given a specific set of items \(S\), it is possible to determine in polynomial time whether there exists a preference ranking \(\succ\) such that \(S \subseteq \AllocationFunction_{\succ}(a_1)\). In such case, it is also easy to compute such a ranking.
\end{fact}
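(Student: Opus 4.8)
The plan is to prove the statement constructively, by exhibiting a greedy algorithm. First, if $|S| > \mu(a_1)$ the answer is immediately ``no'', since the bundle $\AllocationFunction_{\succ}(a_1)$ always contains exactly $\mu(a_1)$ items; so assume $|S| \le \mu(a_1)$. The next step is to observe that it suffices to consider rankings $\succ$ that list all the items of $S$ first (in an order still to be determined) and the remaining items afterwards in an arbitrary order. Reporting such a ranking amounts to the manipulator following the strategy ``at each of my turns, if some not-yet-secured item of $S$ is still available, pick one of them.'' A short exchange argument shows this restriction is without loss of generality: if at some turn the manipulator instead picks an item $z \notin S$, she both fails to make progress on $S$ and leaves $z$ in the pool, where $z$ can only be consumed by a non-manipulator in place of some other item (possibly an item of $S$); hence switching to securing an available item of $S$ can never hurt. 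Consequently the whole problem reduces to choosing the order in which the manipulator secures the items of $S$ at her successive turns; the items outside $S$ are irrelevant once $S$ is secured and can be appended in any order.

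The second ingredient is a monotonicity property of the picking process: whenever the manipulator removes an item, each non-manipulator's favourite remaining item can only move \emph{down} in her (fixed) ranking, so that for every item still in the pool, the time step at which some non-manipulator would grab it next --- its \emph{deadline} --- can only decrease. This suggests an earliest-deadline-first greedy: scan $\PickingSequence$; at a non-manipulator's turn let her take her favourite remaining item; at the manipulator's turn, among the items of $S$ that are still available and not yet secured, secure the one with the smallest current deadline, recomputed from the present state. Output ``no'' if at some point a not-yet-secured item of $S$ has already left the pool, or if at the end some item of $S$ is still unsecured; otherwise output the ranking whose prefix is $S$ in the order produced by the greedy. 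Correctness --- equivalently, the fact that this greedy secures all of $S$ whenever \emph{some} ordering does --- is established by an exchange argument: starting from an arbitrary successful ordering, one repeatedly brings it closer to the greedy one by moving the most urgent available item to the current turn and displacing the item it replaces to a later turn, using the monotonicity property to argue that feasibility is preserved throughout. This exchange step is the delicate part of the proof, precisely because the deadlines depend on the manipulator's earlier choices; monotonicity is what makes this circular dependence benign and is the real engine of the argument.

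Finally, one checks that the algorithm runs in polynomial time: $\PickingSequence$ has length $m$, and each manipulator turn requires only a single forward simulation of the non-manipulators from the current state in order to compute the relevant deadlines, so the overall running time is polynomial in $n$ and $m$. Moreover, the desired ranking, whenever one exists, is produced explicitly by the greedy, which settles the ``also easy to compute'' part of the statement. The main obstacle is therefore entirely in the correctness proof of the greedy rule, and specifically in making the exchange argument go through in the presence of state-dependent deadlines.
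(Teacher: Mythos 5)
The paper never proves this statement: it is imported as a black box from Propositions~7 and~8 of Bouveret and Lang~\cite{bouveret2011general}, so there is no internal proof to compare against. Your plan (rank the items of $S$ first, in the order in which the other agents would otherwise reach them, and certify this by exchange arguments) is essentially the argument used in that reference, so the approach itself is sound. The problem is that, as written, the two exchange arguments --- which are the entire mathematical content --- are either not carried out or justified by a reasoning that does not work as stated.

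Concretely, the reduction to ``$S$-first'' rankings is not established by your sentence about $z$. When you replace a pick $z \notin S$ at some turn by a pick $s \in S$, the two subsequent processes differ by a cascading swap: at each later non-manipulator turn the agent picks either her original item or the current ``floating'' item, and what must be shown is that this floating item is never an unsecured element of $S$. Your parenthetical ``(possibly an item of $S$)'' concedes exactly the bad event that has to be ruled out, and the reason it can be ruled out is one you never invoke: under the assumed successful ranking every item of $S$ is picked by $a_1$, hence no non-manipulator ever picks an item of $S$, so by induction the floating item stays outside $S$; making this precise requires the same pool-differs-by-one-item induction the paper itself uses in Lemmas~\ref{lemma: proof bound 1} and~\ref{lemma: proof bound 2}. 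Second, the earliest-deadline-first exchange, which you yourself call ``the delicate part,'' is simply deferred, and the deadline is only defined up to an unspecified simulation convention (what $a_1$ is assumed to do in the forward simulation), so the monotonicity property the exchange is supposed to lean on is never pinned down. A correct completion must (i) state and prove the swap invariant, and (ii) run the exchange for the earliest-deadline order using the fact that removing items only moves each non-manipulator down her ranking and hence only shrinks deadlines; until then this is a plausible plan rather than a proof.
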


Hence, one can try all the \(\binom{m}{\mu(a_1)}\) 
possible sets of \(\mu(a_1)\) items to determine which is the best one that \(a_1\) can get. This approach shows that the \(\MSA\) problem is in XP w.r.t. parameter \(\mu(a_1)\). 

\begin{example}[Example~\ref{example:running} continued] As in this example \(\mu(a_1) = 2\), \(a_1\) just has to determine which one of the \({4 \choose 2}\) following sets she must obtain: \(\{i_1,i_2\}, \{i_1,i_3\}, \{i_1,i_4\}, \{i_2,i_3\}\), \(\{i_2,i_4\}, \{i_3,i_4\}\).
\end{example}

To obtain further positive results, we first design a dynamic programming scheme.  
We will then explain how this scheme entails several positive parameterized complexity results. 
\subsection{A Dynamic Programming Scheme}
Our dynamic programming approach considers pairs \((k,S)\) where \(S \subseteq \ItemSet\) and \(k \in \{0,1,\ldots,\mu(a_1)\}\).  In a state characterized by the pair \((k,S)\), we know that the items in \(S\) have been picked by some agents in \(\AgentSet\) as well as \(k\) other items that have been picked by agent \(a_1\). However, while the items in \(S\) are clearly identified, the identities of these \(k\) other items are unspecified.

Given a pair \((k,S)\), the number of items that have already been picked is \(|S| + k\). Hence, the next picker is \(\PickingSequence(|S| + k + 1)\). Let us denote this agent by \(a\). If \(a = a_1\), i.e., she is the manipulator, then she will pick one more item within the set \(\ItemSet \setminus S\) and we move to state \((k+1,S)\). 
Otherwise, let \(b(a,S)\) denote the preferred item of agent \(a\) within the set \(\ItemSet \setminus S\). Then, two cases are possible: 
\begin{itemize}
    \item In the first case, \(b(a,S)\) has already been picked by agent \(a_1\). Then, we move to state \((k - 1, S \cup \{b(a,S)\})\). Note that this is only possible if \(k\) was greater than or equal to one.
    \item Otherwise \(b(a,S)\) is picked by agent \(a\) and we move to state \((k,S\cup \{b(a,S)\})\).
\end{itemize}

Let us denote by \(V(k,S)\) the maximal utility that the manipulator can get from items in \(\ItemSet\setminus S\) starting from state \((k,S)\). Then the value of an optimal manipulation is given by \(V(k = 0, S = \emptyset)\). From the previous analysis, function \(V\) verifies the following dynamic programming equations: 
\begin{align}
    V(k,S) \!&= \!V(k+1 , S) \text{ if } \PickingSequence(|S| + k + 1) = a_1 \label{eq:dp1}\\
    V(k=0,S) \!&=\! V(k,S\!\cup\!\{b(a,S)\}) \!\text{ if }\! \PickingSequence(|S| + k + 1)\! =\! a \!\neq\! a_1 \label{eq:dp2}\\
    V(k>0,S) \!&=\! \max\big(V(k-1,S\cup\{b(a,S)\}) + u(b(a,S)), \notag\\
          V(k&,S\cup\{b(a,S)\})\big) \text{ if } \PickingSequence(|S| + k + 1) = a \neq a_1 \label{eq:dp3}
\end{align}
where the termination is guaranteed by the fact that \(V(k,S) = \sum_{i \in \ItemSet \setminus S} u(i)\) when \(|S| + k = m\). 

Equations~\ref{eq:dp1}-\ref{eq:dp3} induce a directed acyclic state graph \(\DPG=(\DPV,\DPA)\), where \(\DPV\) is the set of states generated from state \((k=0,S=\emptyset)\) when using these equations and the arcs in \(\DPA\) connect each state to its successor states. We will also denote by \(\DPS = \{S|(k,S)\in\DPV\}\) the set of item-sets \(S\) that are involved in \(\DPV\).

Notably, solving Equations~\ref{eq:dp1}-\ref{eq:dp3} can be performed by building graph \(\DPG\) and running backward induction on it (from the lastly generated states to the initial state \((k=0,S=\emptyset)\)). By standard bookkeeping techniques one can also identify the items that are taken in an optimal manipulation and the order in which they are taken and then recover an optimal ranking to report (where these items are ranked first and in the same order). 

\begin{example}[Example \ref{example:running} continued]
Let us illustrate our approach on our running example. We let \(u(i_1) = 5\), \(u(i_2) = 4\), \(u(i_3) = 3\) and \(u(i_4) = 1\). Figure \ref{fig:Dynprog} displays the resulting state graph \(\DPG\). The values of the states are given next to them and the optimal branches are displayed with thick solid arrows. In this example, \(\DPS = \{\emptyset, \{i_3\}, \{i_1,i_3\},\{i_3,i_4\},\{i_1,i_2,i_3\},\{i_1,i_3,i_4\}\}\). As we can see, the optimal choices for \(a_1\) is to first pick \(i_3\) so that \(i_2\) is still available the second time she becomes the picker. Hence, an optimal manipulation is given by \(\succ = i_3 \succ i_2 \succ i_1 \succ i_4\) which results in an allocation \(\phi_{\succ}(a_1) =\{i_2, i_3\}\) with a utility of $7$ whereas \(u_T = 6\).
\end{example}
\begin{figure}[!t] 
   \centering
    \scalebox{0.9}{\begin{tikzpicture}[->,>=stealth',shorten >=1pt,auto,node distance=7cm,semithick]
  \node[rectangle,draw,text=black] (A0)   at (0,2)  {$(0,\emptyset)$};
  \node[text=black] (A0V)   at (1, 2) {$7$};
  \node[text=black] (A0V)   at (-7, 2) {$\PickingSequence(1) = a_1$};
  \node[rectangle,draw,text=black] (A)   at (0,0.5)  {$(1,\emptyset)$};
  \node[text=black] (AV)   at (1, 0.5) {$7$};
  \node[text=black] (A0V)   at (-7, 0.5) {$\PickingSequence(2) = a_2$};
  \node[rectangle,draw,text=black] (B)   at (-2,-3) {$(1,\{i_3\})$};
  \node[text=black] (BV)   at (-0.9, -3) {$6$};
  \node[text=black] (A0V)   at (-7, -3) {$\PickingSequence(3) = a_3$};
  \node[text=black] (A0V)   at (-7, -6.5) {$\PickingSequence(4) = a_1$};
  \node[rectangle,draw,text=black] (C)   at (1,-1.5) {$(0,\{i_3\})$};
  \node[text=black] (CV)   at (2, -1.5) {$4$};
  \node[rectangle,draw,text=black] (D)   at (-5,-6.5) {$(1,\{i_1,i_3\})$};
  \node[text=black] (DV)   at (-3.8, -6.5) {$5$};
  \node[rectangle,draw,text=black] (E)   at (-2,-5) {$(0,\{i_1,i_3\})$};
  \node[text=black] (EV)   at (-0.5, -5) {$1$};
  \node[rectangle,draw,text=black] (F)   at (3, -3) {$(0,\{i_3,i_4\})$};
  \node[text=black] (FV)   at (4.5, -3) {$4$};
  \node[rectangle,draw,text=black] (G)   at (-5, -8) {$(2,\{i_1,i_3\})$};
  \node[text=black] (GV)   at (-3.8, -8) {$5$};
  \node[rectangle,draw,text=black] (H)   at (-2, -6.5) {$(0,\{i_1,i_2,i_3\})$};
  \node[text=black] (HV)   at (-0.5, -6.5) {$1$};
  \node[rectangle,draw,text=black] (I)   at (-2, -8) {$(1,\{i_1,i_2,i_3\})$};
  \node[text=black] (IV)   at (-0.5, -8) {$1$};
  \node[rectangle,draw,text=black] (J)   at (3, -6.5) {$(0,\{i_1,i_3,i_4\})$};
  \node[text=black] (JV)   at (4.5, -6.5) {$4$};
  \node[rectangle,draw,text=black] (K)   at (3, -8) {$(1,\{i_1,i_3,i_4\})$};
  \node[text=black] (KV)   at (4.5, -8) {$4$};
  \node[text=black,text width = 4cm] (L)   at (-5, -9.5) {$a_1$ has picked $i_2,i_4$, \(u(i_2) + u(i_4) = 5\)};
  \node[text=black,text width = 2.8cm] (M)   at (-2, -9.5) {$a_1$ has picked $i_4$, \(u(i_4) = 1\)};
  \node[text=black,text width = 2.8cm] (N)   at (3, -9.5) {$a_1$ has picked $i_2$, \(u(i_2) = 4\)};

  \path (A0) edge[line width=1.0pt]  node {} (A) 
     (A) edge [dotted,line width=0.5pt,left] node {\(a_2\) picks \(i_3\)} (B)	  
	  (A) edge [line width=1.0pt,text width = 2.8cm, pos = 0.8] node {\(a_1\) has picked \(i_3\), \(u(i_3) = 3\)} (C)
	(B) edge[left,dotted,line width=0.5pt]  node {\(a_3\) picks \(i_1\)} (D)
	(B) edge [line width=1.0pt,text width = 2.8cm]  node {\(a_1\) has picked \(i_1\), \(u(i_1) = 5\)} (E)
	(C) edge [line width=1.0pt] node {\(a_2\) picks \(i_4\)} (F)
		(D) edge [line width=1.0pt] node {} (G)
			(E) edge [line width=1.0pt]  node {\(a_3\) picks \(i_2\)} (H)
				(H) edge [line width=1.0pt] node {} (I)
					(F) edge [line width=1.0pt] node {\(a_3\) picks \(i_1\)} (J)
						(J) edge [line width=1.0pt] node {} (K)
						(G) edge [line width=1.0pt] node {} (L)
						(I) edge [line width=1.0pt] node {} (M)
						(K) edge [line width=1.0pt] node {} (N);
	
	 \path [draw = black, rounded corners, inner sep=100pt,dotted]  
               (-7.8,2.5) 
            -- (4.7,2.5)
	    -- (4.7,1)	 
            -- (-7.8,1)
            -- cycle;
    \path [draw = black, rounded corners, inner sep=100pt,dotted]  
               (-7.8,0.8) 
            -- (4.7,0.8)
	    -- (4.7,-2.3)	 
            -- (-7.8,-2.3)
            -- cycle;
   \path [draw = black, rounded corners, inner sep=100pt,dotted]  
               (-7.8,-2.5) 
            -- (4.7,-2.5)
	    -- (4.7,-6)	 
            -- (-7.8,-6)
            -- cycle;
    \path [draw = black, rounded corners, inner sep=100pt,dotted]  
               (-7.8,-6.1) 
            -- (4.7,-6.1)
	    -- (4.7,-7.5)	 
            -- (-7.8,-7.5)
            -- cycle;
	  \end{tikzpicture}}
    \caption{\label{fig:Dynprog} Directed acyclic state graph \(\DPG\) in Example \ref{example:running}}
\end{figure}
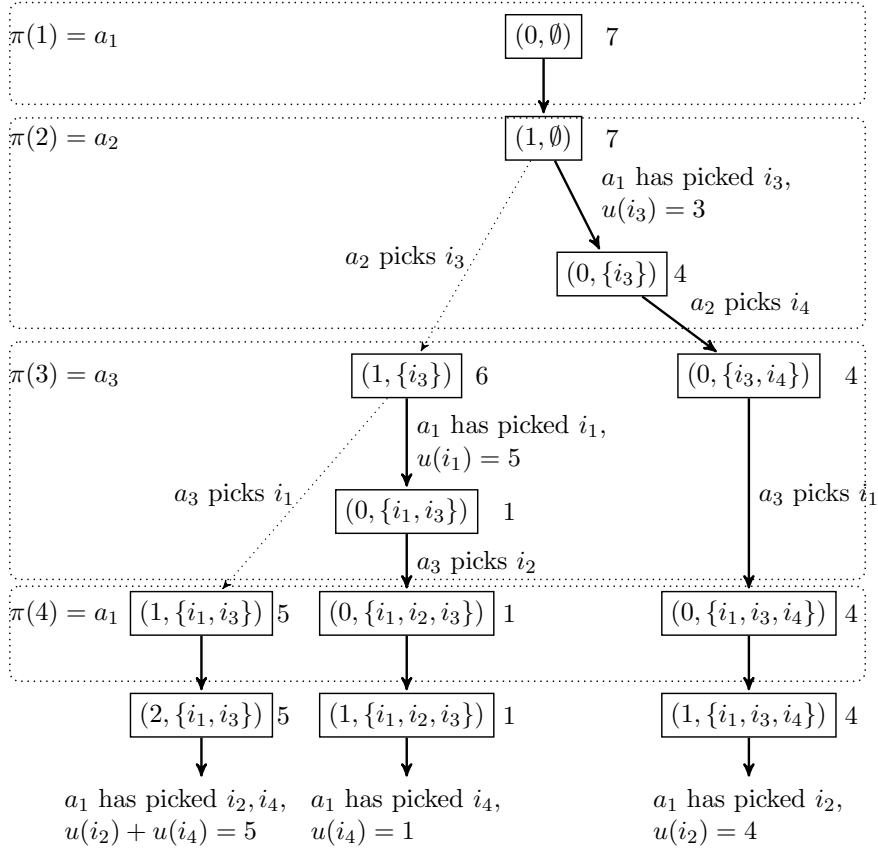
\subsection{Complexity Analysis}
We now provide 
positive parameterized complexity results by proving several upper bounds on \(|\DPV|\). In fact, we will prove bounds on \(|\DPS|\) and use the observation that \(|\DPV| \le (\mu(a_1)+1) |\DPS| \le m |\DPS|\) as there are only \(\mu(a_1)+1\) possible values for $k$ in a state $(k,S)$. 

\paragraph{The algorithm is XP w.r.t. parameter \(n\).} 
Let \(D(a,i)\) denote the set of items that agent \(a\) prefers to item \(i\), i.e., \(D(a,i) = \{j \in \ItemSet | j \succ_a i\}\). Then, for any set \(S\subseteq\ItemSet\), the definition of \(b(a,S)\), which we recall is the preferred element of agent \(a\) in set \(\ItemSet \setminus S\) implies that \(\bigcup_{a \in \AgentSet\setminus\{a_1\}}D(a,b(a,S)) \subseteq S\). Let us denote by \(\Delta\) the set of item-sets for which the equality holds, i.e., \(\Delta = \{S\subseteq\ItemSet | \bigcup_{a \in \AgentSet\setminus\{a_1\}}D(a,b(a,S)) = S \}\). Note that a set \(S\in \Delta\) is completely determined by the vector \((b(a_2,S), \ldots, b(a_n,S))\) and thus \(|\Delta| \leq m^{n-1}\). Our first key insight is that \(\DPS\) is a subset of \(\Delta\).

\begin{lemma} \label{lemma:Delta}
The set \(\DPS\) is a subset of \(\Delta\).
\end{lemma}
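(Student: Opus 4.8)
The plan is to prove Lemma~\ref{lemma:Delta} by induction on the construction of the state graph $\DPG$, showing that every $S$ appearing in a reachable state $(k,S)$ satisfies $\bigcup_{a \in \AgentSet\setminus\{a_1\}} D(a,b(a,S)) = S$. The base case is the initial state $(k=0,S=\emptyset)$: here $\ItemSet\setminus S = \ItemSet$, so $b(a,\emptyset)$ is the globally top-ranked item of agent $a$, whence $D(a,b(a,\emptyset)) = \emptyset$ for every non-manipulator $a$, and the union is indeed $\emptyset = S$. So $\emptyset \in \Delta$.

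For the inductive step, suppose $(k,S) \in \DPV$ with $S \in \Delta$, and consider a successor state. If the next picker is $a_1$ (Equation~\ref{eq:dp1}), the successor is $(k+1,S)$ with the \emph{same} item-set $S$, so there is nothing to prove. The interesting case is when the next picker is some non-manipulator $a' \neq a_1$ (Equations~\ref{eq:dp2}--\ref{eq:dp3}): in every branch the new item-set is $S' = S \cup \{b(a',S)\}$, and I must show $S' \in \Delta$, i.e. $\bigcup_{a \in \AgentSet\setminus\{a_1\}} D(a,b(a,S')) = S'$. The inclusion ``$\subseteq$'' is the easy general fact already noted in the text ($\bigcup_a D(a,b(a,S')) \subseteq S'$ always holds by definition of $b(\cdot,\cdot)$). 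For ``$\supseteq$'', take $j \in S'$. If $j = b(a',S)$: since $b(a',S)$ was $a'$'s favourite item in $\ItemSet\setminus S$, its favourite in $\ItemSet\setminus S' \subsetneq \ItemSet\setminus S$ is some item $j'$ with $b(a',S) \succ_{a'} j'$ (as $b(a',S)$ itself was removed), so $j = b(a',S) \in D(a',b(a',S')) = D(a',j')$, giving $j$ in the union. If instead $j \in S$: by the induction hypothesis $S \in \Delta$, so there is a non-manipulator $a$ with $j \in D(a,b(a,S))$, i.e. $j \succ_a b(a,S)$; I then need to relate $b(a,S)$ to $b(a,S')$. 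Since $S \subseteq S'$, $b(a,S')$ is $a$'s favourite in the smaller set $\ItemSet\setminus S'$, which is $\ItemSet\setminus S$ minus $b(a',S)$; hence either $b(a,S') = b(a,S)$ (if $b(a,S) \neq b(a',S)$), or $b(a,S')$ is even lower in $a$'s ranking (if $b(a,S) = b(a',S)$). In both subcases $b(a,S) \succeq_a b(a,S')$, so $j \succ_a b(a,S) \succeq_a b(a,S')$ forces $j \in D(a,b(a,S'))$, again putting $j$ in the union. Thus $S' \supseteq \bigcup_a D(a,b(a,S'))$, and combined with the reverse inclusion, $S' \in \Delta$.

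Since every state of $\DPG$ is obtained from $(0,\emptyset)$ by finitely many applications of Equations~\ref{eq:dp1}--\ref{eq:dp3}, the induction shows every $S \in \DPS$ lies in $\Delta$, i.e. $\DPS \subseteq \Delta$, which is the claim. The step I expect to require the most care is the sub-case $j \in S$ of the inductive step: one has to argue cleanly that passing from $S$ to $S \cup \{b(a',S)\}$ can only push each non-manipulator's current favourite \emph{down} in her own ranking (never up), so that the set $D(a,\cdot)$ of items she prefers to her favourite can only grow. Once that monotonicity observation is isolated and stated precisely, the rest is bookkeeping. (A small point worth flagging explicitly in the write-up: all these arguments implicitly use that $\ItemSet\setminus S'$ is nonempty whenever a non-manipulator is about to pick, which holds because the process only generates states with $|S|+k < m$ before termination.)
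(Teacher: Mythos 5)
Your proof is correct and follows essentially the same route as the paper: induction over the generation of states in $\DPG$, with the key step being that adding the picked item $b(a',S)$ can only move each non-manipulator's current favourite down in her ranking, so each $D(a,b(a,S))$ (and $\{b(a',S)\}$ itself) is contained in $D(a,b(a,S\cup\{b(a',S)\}))$ — exactly the two inclusion relationships displayed in the paper's proof, which you verify element-wise instead. The extra detail on the base case and the nonemptiness caveat are fine but not a departure from the paper's argument.
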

\begin{proof}
Let us show by induction that for each state \((k,S)\in \DPV\), \(S\) is in \(\Delta\). 
The result is true for the initial state in which \(S = \emptyset\). Assume that the result is true for a state \((k,S)\). We show that the result also holds for the successor states.  If \(\PickingSequence(|S| + k + 1) = a_1\), then the successor state is \((k+1,S)\) so the result is also true for this new state as \(S\) is unchanged. Otherwise, let \(\PickingSequence(|S| + k + 1) = a^*\), then the successor states are \((k,S\cup\{b(a^*,S)\})\) and \((k-1,S\cup\{b(a^*,S)\}) \). Then we have the following two inclusion relationships:
\begin{align*}
D(a^*,b(a^*,S))\cup\{b(a^*,S)\} \subseteq D(a^*,b(a^*,S\cup\{b(a^*,S)\})),     \\
\forall a \in \AgentSet \setminus \{a_1\}, D(a,b(a,S)) \subseteq D(a,b(a,S\cup\{b(a^*,S)\})).
\end{align*}
 These relationships imply that $S\cup\{b(a^*,S)\}$ is equal to: 
 \begin{align*}
     \bigcup_{a \in \AgentSet\setminus\{a_1\}}\!\!\!\!\!\!D(a,b(a,S)) \cup \{b(a^*,S)\} \subseteq \!\!\!\!\! \bigcup_{a \in \AgentSet\setminus\{a_1\}} \!\!\!\!\!\!D(a,b(a,S \cup \{b(a^*,S)\})),
 \end{align*} 
by the induction hypothesis. As already stated, the reverse inclusion relationship is always true and hence \(S\cup\{b(a^*,S)\} \in \Delta\). 
\end{proof}
Consequently from Lemma~\ref{lemma:Delta}, each state in \(\DPV\) admits two possible representations that we call {\em agent representation} and {\em item representation}. In the item representation, a state \((k,S)\) is represented by a vector of size \(m+1\), i.e., \(S\) is represented by a binary vector of size \(m\). In the agent representation,  a state \((k,S)\) is represented by a vector of size \(n\). In this case, \(S\) is replaced by the vector \((b(a_2,S), \ldots, b(a_n,S))\). Note that processing a state (computing the successor states and the optimal value of the state according to the ones of the successor states) in the agent (resp. item) representation can be done in $O(nm)$ (resp. $O(m)$) operations.
We now show that the \(\MSA\) problem can be solved in polynomial time for any bounded number of agents. 

\begin{theorem} \label{theorem : n}
Problem \(\MSA\) is solvable in \(O(n \cdot m^{n+1})\). As a result, \(\MSA\)  is in XP w.r.t. parameter \(n\). 
\end{theorem}
\begin{proof}
The result follows from the fact that our dynamic programming scheme runs in \(O(n \cdot m^{n+1})\). To obtain this complexity bound, one should use the agent representation. In this case, processing a state requires \(O(nm)\) operations, and one can use a dynamic programming table of size \(m^n\) with one cell per possible vector \((k,b(a_2,S), \ldots, b(a_n,S))\). 
\end{proof}

We now argue that our dynamic programming approach yields an FPT algorithm w.r.t. parameters \(n + \mu(a_1)\), \(n + \rg^{\max}\) and \(\rg^{\max}\) by providing tighter upper bounds on \(|\DPS|\). To use these bounds, we will need the two following lemmata:

\begin{lemma} \label{lemma: building graph}
Under the item representation, the graph \(\DPG\) can be build in $O(m|\DPV|^2)$.
\end{lemma}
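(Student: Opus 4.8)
The plan is to build $\DPG$ by an explicit forward exploration (a breadth-first or depth-first search) starting from the initial state $(k=0,S=\emptyset)$, repeatedly applying Equations~\ref{eq:dp1}--\ref{eq:dp3} to generate successor states, while maintaining the set of already-discovered states so that arcs are connected to the right vertices and no state is expanded twice.

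Concretely, I would keep a list $L$ of the states discovered so far, each stored in the item representation (that is, as a pair consisting of the value $k$ and the length-$m$ incidence vector of $S$), together with a worklist of states still to be expanded; initially both contain only $(0,\emptyset)$. To expand a state $(k,S)$, first read off the next picker $a = \PickingSequence(|S|+k+1)$ (if $|S|+k = m$ the state is terminal and has no successors). If $a = a_1$, the unique successor is $(k+1,S)$. Otherwise compute $b(a,S)$, the top item of $a$ not in $S$, by scanning $a$'s preference ranking, at cost $O(m)$; the successors are then $(k,S\cup\{b(a,S)\})$ and, when $k>0$, also $(k-1,S\cup\{b(a,S)\})$. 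For each generated successor state, test whether it already appears in $L$; if not, append it to $L$ and to the worklist; in every case add the corresponding arc to $\DPA$. Since the exploration only ever generates states reachable from $(0,\emptyset)$ using these equations, at termination $L = \DPV$ and the constructed arc set is exactly $\DPA$.

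For the running time, each state of $\DPV$ is expanded exactly once. One expansion costs $O(m)$ for determining the next picker and computing $b(a,S)$, plus the cost of the duplicate checks for its at most two successors. Performing a duplicate check naively---comparing the candidate length-$(m+1)$ vector against every vector currently in $L$---costs $O(m)$ per comparison and $O(m\,|\DPV|)$ in total, since $|L| \le |\DPV|$ throughout. Hence each expansion takes $O(m\,|\DPV|)$ time, and summing over the $|\DPV|$ expansions yields the claimed $O(m\,|\DPV|^2)$ bound; the resulting graph has $O(|\DPV|)$ arcs, so storing it fits within the same budget.

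The only delicate point is the deduplication: one must ensure that a state generated along two different exploration paths is recognised as the same vertex, which is exactly why every freshly generated successor is looked up in $L$ before being inserted. This linear scan is what produces the quadratic factor in $|\DPV|$; a hash table or a trie keyed on the incidence vector would lower the cost to $O(m\,|\DPV|)$ up to hashing or ordering overhead, but the weaker bound stated in the lemma already suffices for the complexity arguments that follow, so I would keep the simple scan.
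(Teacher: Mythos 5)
Your proposal is correct and follows essentially the same route as the paper: incrementally expand states from $(0,\emptyset)$ using the dynamic programming equations, detect duplicates by comparing each freshly generated state (as a length-$(m+1)$ vector) against all previously generated ones, and charge $O(m|\DPV|)$ per expansion for a total of $O(m|\DPV|^2)$. The only cosmetic difference is that the paper also remarks explicitly that $\DPG$ is acyclic, which your argument does not need for the construction itself.
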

\begin{proof}
First note that \(\DPG\) is indeed acyclic. Indeed, given a state that can occur at time step \(t\) of the allocation process (i.e., \(k+|S| = t\)), its successors will either correspond to time step \(t+1\) or will still correspond to time step \(t\) but with a strictly lower value for parameter \(k\). We now show how to incrementally build \(\DPG\) from state \((k = 0,S = \emptyset)\). For each new state generated at the previous iteration, compute its successor states, add edges towards them, and label them with the corresponding utility values. Moreover, each time a state is generated, compare it to the states already generated to avoid the creation of duplicates. If it is indeed a new state, its successors will be computed in the next iteration. This process is repeated until all states are generated. Note that because each state is only processed once, we will generate at most \(2|\DPV|\) states. However, because of the duplicate removal operation performed each time a state is generated, the method runs in $O(m|\DPV|^2)$. Indeed, this step will trigger \(O(|\DPV|^2)\) comparisons, each requiring \(m+1\) operations.  
\end{proof}

\begin{lemma} \label{lemma: solving graph}
Under the item representation, problem \(\MSA\) can be solved in $O(m|\DPV|^2)$.
\end{lemma}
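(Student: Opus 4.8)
The plan is to combine Lemma~\ref{lemma: building graph} with a standard backward-induction pass over the resulting directed acyclic state graph. First I would invoke Lemma~\ref{lemma: building graph} to construct \(\DPG\) under the item representation in time \(O(m|\DPV|^2)\); this also provides, for every state, its (at most two) successor states together with the utility labels \(u(b(a,S))\) occurring in Equations~\ref{eq:dp1}--\ref{eq:dp3}. Since each state has at most two outgoing arcs, \(|\DPA| \le 2|\DPV|\).

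Next I would note, as already argued inside the proof of Lemma~\ref{lemma: building graph}, that \(\DPG\) is acyclic, so it admits a topological order; concretely, one can order the states lexicographically by \((|S|+k,\,-k)\), because every arc either increases \(|S|+k\) by one or leaves it unchanged while strictly decreasing \(k\). Processing the states in reverse topological order, I would compute \(V(k,S)\) for each state via Equations~\ref{eq:dp1}--\ref{eq:dp3}, reading off the values \(V\) at the already-processed successor states; the base case \(V(k,S) = \sum_{i\in\ItemSet\setminus S} u(i)\) for \(|S|+k=m\) is evaluated directly. Each such update looks at a constant number of successors and needs \(O(m)\) operations (to fetch \(u(b(a,S))\), or, in the base case, to sum \(u\) over \(\ItemSet\setminus S\)), so the whole backward induction costs \(O(m|\DPV|)\), which is dominated by the graph-construction cost. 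The value of an optimal manipulation is then \(V(0,\emptyset)\).

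Finally, to output an actual ranking and not merely its value, I would store at each state a pointer to the successor attaining the maximum in Equation~\ref{eq:dp3} (and to the unique successor in the other cases). Tracing these pointers from \((0,\emptyset)\) recovers the sequence of items that \(a_1\) picks in an optimal manipulation, in the order she picks them; ranking exactly these items first, in that order, followed by the remaining items in an arbitrary order, yields an optimal preference ranking, as explained right after Equations~\ref{eq:dp1}--\ref{eq:dp3}. This bookkeeping adds only \(O(m)\) work per state. Summing the three phases gives the claimed bound \(O(m|\DPV|^2)\).

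The only point requiring care is the running-time accounting: the work that is genuinely new with respect to Lemma~\ref{lemma: building graph} — the backward induction and the ranking reconstruction — must be shown to fit within \(O(m|\DPV|^2)\), which it does comfortably, since it is in fact only \(O(m|\DPV|)\); the quadratic dependence on \(|\DPV|\) is entirely inherited from the duplicate-detection step used when building \(\DPG\). Beyond stating these routine facts cleanly, there is no real obstacle.
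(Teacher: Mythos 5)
Your proof is correct and follows essentially the same route as the paper: invoke Lemma~\ref{lemma: building graph} to construct \(\DPG\) in \(O(m|\DPV|^2)\), then run backward induction (with bookkeeping to recover the ranking) in \(O(m|\DPV|)\), which is dominated by the construction cost. The extra details you supply (topological order, successor pointers) are fine but not needed beyond what the paper states.
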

\begin{proof}
By Lemma~\ref{lemma: building graph}, we can build \(\DPG\) in $O(m|\DPV|^2)$ and compute an optimal manipulation by backward induction in $O(m|\DPV|)$.
\end{proof}

\paragraph{The algorithm is FPT w.r.t. parameter \(n + \mu(a_1)\).}
We further argue that \(|\DPS|\) can be upper  bounded by \(m(\mu(a_1)+1)^{n-1}\). This is a consequence of the following lemma, where \(\mu(a,t)\) denotes the number of items that agent \(a\) gets to pick within the \(t\) first time steps.

\begin{lemma} \label{lemma: n + mu}
For each time step \(t\), there is a set \(S_t\) of \(t - \mu(a_1,t)\) items that are always picked within the \(t\) first time steps, whatever the actions of the manipulator.
\end{lemma}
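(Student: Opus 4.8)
The plan is to exhibit $S_t$ \emph{explicitly} and independently of the manipulator, by considering the ``reduced'' picking process in which all of the manipulator's turns among the first $t$ steps are simply deleted. Let $\sigma$ be the subsequence of $(\PickingSequence(1), \dots, \PickingSequence(t))$ obtained by removing every occurrence of $a_1$; it has length $\ell := t - \mu(a_1, t)$ and all its entries are non-manipulators. Run the greedy picking process driven by $\sigma$ on the full item set $\ItemSet$: let $s_j$ be the favourite item of agent $\sigma(j)$ within $\ItemSet \setminus \{s_1, \dots, s_{j-1}\}$, and set $S_t := \{s_1, \dots, s_\ell\}$. Since each $s_j$ is chosen from a set excluding $s_1, \dots, s_{j-1}$, these items are pairwise distinct, so $|S_t| = \ell = t - \mu(a_1, t)$, and $S_t$ plainly does not depend on the ranking reported by $a_1$.

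It then remains to show that, for every ranking reported by $a_1$, all items of $S_t$ are picked during the first $t$ steps of the induced (real) allocation process. First observe that the $j$-th turn of a non-manipulator among the first $t$ steps of the real process is taken by agent $\sigma(j)$, precisely because $\sigma$ is the non-$a_1$ subsequence of $\PickingSequence$. I would then prove by induction on $j \in \{0, 1, \dots, \ell\}$ the invariant: immediately after the $j$-th non-manipulator turn of the real process (or at the start, if $j=0$), all of $s_1, \dots, s_j$ have already been picked. The base case $j=0$ is vacuous. For the inductive step, at the $(j+1)$-th non-manipulator turn agent $\sigma(j+1)$ picks her favourite still-available item; by the induction hypothesis the already-picked set contains $\{s_1,\dots,s_j\}$, so the set of still-available items is contained in $\ItemSet \setminus \{s_1, \dots, s_j\}$. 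Now $s_{j+1}$ is, by construction, agent $\sigma(j+1)$'s favourite item of $\ItemSet \setminus \{s_1, \dots, s_j\}$; hence if $s_{j+1}$ is still available she picks exactly $s_{j+1}$, and if it is not available it has already been picked — in both cases $s_{j+1}$ is picked by the end of this turn, which closes the induction. Taking $j = \ell$ and noting that the $\ell$-th non-manipulator turn occurs within the first $t$ steps yields the claim.

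The delicate point — and the only real obstacle — is the monotonicity comparison in the inductive step: one must argue that, whatever $a_1$ has done, her extra picks can only shrink the pool of available items relative to the reduced process, so that greediness of $\sigma(j+1)$ still forces the choice $s_{j+1}$ whenever $s_{j+1}$ survives, and that when it does not survive it was consumed (by $a_1$ or by an earlier non-manipulator) rather than still sitting unpicked. The trick that makes this routine is to phrase the invariant as ``$\{s_1,\dots,s_j\}$ have all been picked'' rather than the more tempting but false ``$\sigma(j)$ has picked $s_j$''; with the weaker statement the two cases merge cleanly.
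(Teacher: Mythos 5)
Your proof is correct and follows essentially the same route as the paper: your reduced process driven by the subsequence $\sigma$ is exactly the paper's instance $\mathcal{J}^{-a_1}$ obtained by deleting $a_1$, your set $S_t$ is its set $S^{-a_1}_{t-\mu(a_1,t)}$, and your induction (``either $s_{j+1}$ is still available and greediness forces $\sigma(j+1)$ to take it, or it was already picked'') is the same key step, just spelled out in more detail than the paper's sketch. No gaps.
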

\begin{proof}[Sketch of the proof]
Given an instance \(\mathcal{J}\) of the \(\MSA\) problem, consider the instance \(\mathcal{J}^{-a_1}\) obtained from \(\mathcal{J}\) by removing \(a_1\). Moreover, let us denote by \(S_t^{-a_1}\) the set of items picked at the end of the \(t^{th}\) time step in \(\mathcal{J}^{-a_1}\). This set of size \(t\) is clearly defined as all agents behave truthfully in \(\mathcal{J}^{-a_1}\). We argue that after \(t\) time steps in \(\mathcal{J}\), all items in \(S_{t- \mu(a_1,t)}^{-a_1}\) have been picked whatever the actions of \(a_1\). This can be showed by induction because at each time step where the picker is a non-manipulator, she will pick the same item as in \(\mathcal{J}^{-a_1}\) unless this item has already been picked.
\end{proof}
As a consequence of Lemma~\ref{lemma: n + mu}, for each possible set \(S\) that can appear at time step \(t\) and agent \(a\in \AgentSet\setminus\{a_1\}\), \(b(a,S)\) can only be \(\mu(a_1,t-1)+1\) different items. More precisely, \(b(a,S)\) has to be one of the \(\mu(a_1,t-1)+1\) preferred items of \(a\) in \(\ItemSet\setminus S_{t-1}\). As a result, the number of possible vectors \((b(a_2,S), \ldots, b(a_n,S))\) associated to all the possible sets \(S\) that can appear at time step \(t\) is upper bounded by \((\mu(a_1,t-1)+1)^{n-1}\). Lastly, by using the facts that each set \(S\in \DPS\) is characterized by the vector \((b(a_2,S), \ldots, b(a_n,S))\), that \(\mu(a_1,t) \le \mu(a_1)\) for all \(t\),  and by considering all possible time steps, we obtain that \(|\DPS| \le m(\mu(a_1)+1)^{n-1}\). 
\begin{theorem} \label{theorem n + mu}
Problem \(\MSA\) is solvable in \(O(m^3 (\mu(a_1)+1)^{2n})\).  As a result, \(\MSA\)  is FPT w.r.t. parameter  \(n + \mu(a_1)\).
\end{theorem}
\begin{proof}
This result is a consequence of Lemma~\ref{lemma: solving graph} and the fact that \(|\DPV| \le (\mu(a_1)+1)|\DPS| \le m(\mu(a_1)+1)^{n}\).
\end{proof}

\paragraph{The algorithm is FPT w.r.t. parameter \(n + \rg^{\max}\).} 
We show that \(|\DPS|\) is also upper bounded by \(m(2\rg^{\max})^{n-2}\). This is a consequence of the following lemma. 
\begin{lemma}\label{lem : diff of rk and rg}
For any set \(S \subseteq \ItemSet\), and any two agents \(a_s,a_t \in \AgentSet\setminus\{a_1\}\), 
\[| \rk_{a_s}(b(a_s,S)) - \rk_{a_t}(b(a_t,S))| \leq \rg^{\max} - 1.\]
\end{lemma}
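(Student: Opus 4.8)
The plan is to reduce the two-sided bound to a single one-sided inequality by symmetry, and then to establish that inequality by chaining just two elementary observations: the optimality of $b(a,S)$ inside $\ItemSet\setminus S$, and the definition of the range of an item.

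First I would fix $S$ and the two agents, and set $x = b(a_s,S)$ and $y = b(a_t,S)$; the goal is to bound $|\rk_{a_s}(x) - \rk_{a_t}(y)|$. Since the statement is symmetric in $(a_s,x)$ and $(a_t,y)$, it suffices to prove $\rk_{a_s}(x) - \rk_{a_t}(y) \le \rg^{\max}-1$ and then invoke the same argument with the roles of $a_s$ and $a_t$ exchanged. For the one-sided bound, note that $y = b(a_t,S)$ lies in $\ItemSet\setminus S$ by definition; since $x = b(a_s,S)$ is the most preferred item of $a_s$ in $\ItemSet\setminus S$ and $y$ is also available, we get $x \succeq_{a_s} y$, i.e. $\rk_{a_s}(x) \le \rk_{a_s}(y)$. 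Next, because $a_s,a_t \in \AgentSet\setminus\{a_1\}$, the definition of the range of item $y$ gives $\rk_{a_s}(y) - \rk_{a_t}(y) \le \rg(y) - 1 \le \rg^{\max}-1$. Combining the two yields $\rk_{a_s}(x) \le \rk_{a_s}(y) \le \rk_{a_t}(y) + \rg^{\max}-1$, as wanted.

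Running the symmetric argument (using now that $x \in \ItemSet\setminus S$, so $y \succeq_{a_t} x$, together with the range of item $x$) gives $\rk_{a_t}(y) - \rk_{a_s}(x) \le \rg^{\max}-1$, and the two inequalities together establish the claim. There is no real obstacle here; the only point requiring a little care is keeping straight which item's range is being invoked in each direction — in the bound $\rk_{a_s}(x) \le \rk_{a_t}(y)+\rg^{\max}-1$ one uses the range of $y=b(a_t,S)$, whereas in the reverse bound one uses the range of $x=b(a_s,S)$. The degenerate case $a_s=a_t$ (equivalently $\rg^{\max}=1$ forcing $x=y$) is trivial and can be dispatched in one line.
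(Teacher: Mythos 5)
Your proof is correct and is essentially the paper's argument: both rest on the same two facts, namely that $b(a_t,S)$ is available to $a_s$ in $\ItemSet\setminus S$ (so $\rk_{a_s}(b(a_s,S)) \le \rk_{a_s}(b(a_t,S))$) and that the range of the item $b(a_t,S)$ bounds $\rk_{a_s}(b(a_t,S)) - \rk_{a_t}(b(a_t,S))$ by $\rg^{\max}-1$. The paper merely packages this as a proof by contradiction of one direction (with the other direction by symmetry), whereas you chain the two inequalities directly; the content is the same.
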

\begin{proof}
If we assume for the sake of contradiction that \(\rk_{a_s}(b(a_s,S)) \ge \rk_{a_t}(b(a_t,S)) + \rg^{\max}\) and  use the fact that \(|\rk_{a_s}(b(a_t,S)) - \rk_{a_t}(b(a_t,S))| < \rg^{\max}\) (by definition of \(\rg^{\max}\)), then we can conclude that \(b(a_t,S) \succ_{a_s} b(a_s,S)\), which contradicts the definition of \(b(a_s,S)\). 
\end{proof}
Lemma~\ref{lem : diff of rk and rg} implies that for each of the \(m\) possible items for \(b(a_2,S)\), there are only \(2\rg^{\max}-1\) possible items for other parameters \(b(a_j,S)\) with \(j>2\). Then, by using the facts that a set \(S\in \DPS\) is characterized by the vector \((b(a_2,S), \ldots, b(a_n,S))\), we obtain that \(\DPS \le m(2\rg^{\max})^{n-2}\). 

\begin{theorem} \label{theorem : n + rg}
Problem \(\MSA\) is solvable in \(O(m^5(2\rg^{\max})^{2(n-2)})\). As a result, \(\MSA\)  is FPT w.r.t. parameter  \(n + \rg^{\max}\). 
\end{theorem}
\begin{proof}
This result is a consequence of Lemma~\ref{lemma: solving graph} and the fact that \(|\DPV| \le m|\DPS| \le m^2(2\rg^{\max})^{n-2}\).
\end{proof}

\paragraph{The algorithm is FPT w.r.t. parameter \(\rg^{\max}\).} 
Lastly, \(|\DPS|\) can also be upper bounded by \(m 2^{2 \rg^{\max}}\). This claim is due to the fact that the set \(S\setminus D(a_2,b(a_2,S))\) cannot contain an item whose rank w.r.t. \(a_2\) is ``too high'', which is proved in the following lemma. 
\begin{lemma} \label{lem : for th rg}
Given \(S\in \DPS\), all \(i\) in \(S\setminus D(a_2,b(a_2,S))\) verify
\[\rk_{a_2}(b(a_2,S)) + 1 \le \rk_{a_2}(i) \le \rk_{a_2}(b(a_2,S)) +2\rg^{\max}.\]    
\end{lemma}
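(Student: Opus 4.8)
The plan is to treat the two inequalities of the lemma separately: the lower bound is essentially a tautology once the definitions are unpacked, while the upper bound is obtained by chaining a handful of rank comparisons, using the fact that \(\DPS\subseteq\Delta\) (Lemma~\ref{lemma:Delta}) together with Lemma~\ref{lem : diff of rk and rg} and the very definition of \(\rg^{\max}\).

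For the lower bound, I would first recall that \(b(a_2,S)\) is the \(\succ_{a_2}\)-maximal item of \(\ItemSet\setminus S\), so in particular \(b(a_2,S)\notin S\). Now take any \(i\in S\setminus D(a_2,b(a_2,S))\). Since \(i\in S\) and \(b(a_2,S)\notin S\) we have \(i\neq b(a_2,S)\), and since \(i\notin D(a_2,b(a_2,S))\) we have \(i\not\succ_{a_2}b(a_2,S)\); as \(\succ_{a_2}\) is a strict total order, these two facts force \(b(a_2,S)\succ_{a_2}i\), i.e. \(\rk_{a_2}(i)\ge \rk_{a_2}(b(a_2,S))+1\).

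For the upper bound, the key is that \(S\in\DPS\subseteq\Delta\), so \(S=\bigcup_{a\in\AgentSet\setminus\{a_1\}}D(a,b(a,S))\). Hence any \(i\in S\) belongs to \(D(a_j,b(a_j,S))\) for some non-manipulator \(a_j\); moreover, because \(i\notin D(a_2,b(a_2,S))\), we must have \(a_j\neq a_2\), so \(a_2\) and \(a_j\) are two distinct agents of \(\AgentSet\setminus\{a_1\}\) and Lemma~\ref{lem : diff of rk and rg} applies to this pair. I would then combine three inequalities: (i) from \(i\succ_{a_j}b(a_j,S)\), \(\rk_{a_j}(i)\le \rk_{a_j}(b(a_j,S))-1\); (ii) from Lemma~\ref{lem : diff of rk and rg}, \(\rk_{a_j}(b(a_j,S))\le \rk_{a_2}(b(a_2,S))+\rg^{\max}-1\); (iii) from \(\rg(i)\le\rg^{\max}\), which bounds the spread of the ranks of \(i\) among the non-manipulators, \(\rk_{a_2}(i)\le \rk_{a_j}(i)+\rg^{\max}-1\). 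Adding (i)--(iii) gives \(\rk_{a_2}(i)\le \rk_{a_2}(b(a_2,S))+2\rg^{\max}-3\), which is in fact slightly stronger than the stated bound of \(\rk_{a_2}(b(a_2,S))+2\rg^{\max}\).

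I do not expect any genuine obstacle here; the only points that deserve a line of justification are that \(b(a_2,S)\notin S\) (so that the lower bound is the strict ``\(+1\)'') and that the witnessing agent \(a_j\) is different from \(a_2\) (so that Lemma~\ref{lem : diff of rk and rg} is legitimately invoked), both of which follow immediately from the definitions of \(b(\cdot,\cdot)\), \(D(\cdot,\cdot)\) and \(\Delta\).
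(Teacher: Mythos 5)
Your proof is correct and uses exactly the same ingredients as the paper's: the lower bound from \(b(a_2,S)\notin S\) and the definition of \(D(a_2,b(a_2,S))\), and the upper bound from the \(\Delta\)-membership of \(S\) giving a witness \(a_j\neq a_2\), the range bound on item \(i\), and Lemma~\ref{lem : diff of rk and rg} applied to \(b(a_2,S)\) and \(b(a_j,S)\). The only difference is presentational: the paper argues by contradiction while you chain the inequalities directly (which even yields the marginally tighter constant \(2\rg^{\max}-3\)), so this is essentially the same proof.
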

\begin{proof}
First note that by definition \(b(a_2,S)\!\! \not \in\!\! S\) (because \(b(a_2,S)\!\! \in\!\!\ItemSet\setminus S\)) and \(D(a_2,b(a_2,S)) \!\!= \!\!\{i\in \ItemSet | \rk_{a_2}(i) < \rk_{a_2}(b(a_2,S))\}\). Hence, the first inequality of the lemma hold.

Let us assume for the sake of contradiction that there exists \(i\!\in\! S\setminus D(a_2,b(a_2,S))\) such that \(\rk_{a_2}(i) \!>\! \rk_{a_2}(b(a_2,S)) +2\rg^{\max}\). 
Because \(S\) belongs to \(\Delta\), we have that \(S\setminus D(a_2,b(a_2,S)) = \bigcup_{a \in \AgentSet\setminus\{a_1\}}D(a,b(a,S))\setminus D(a_2,b(a_2,S))\). Hence, there exists \(a_j\) with \(j\ge 3\) such that \(i\in D(a_j,b(a_j,S))\). 
By definition of \(\rg^{\max}\), we have that \(\rg^{\max} > \rk_{a_2}(i) - \rk_{a_j}(i)\), or equivalently that \(\rk_{a_j}(i) > \rk_{a_2}(i) - \rg^{\max}\), which yields that \(\rk_{a_j}(b(a_j,S)) > \rk_{a_j}(i) > \rk_{a_2}(i) - \rg^{\max} > \rk_{a_2}(b(a_2,S)) +\rg^{\max}\). This contradicts Lemma~\ref{lem : diff of rk and rg}.
\end{proof}
As a consequence of Lemma~\ref{lem : for th rg}, \(|\DPS|\) is upper bounded by \(m 2^{2 \rg^{\max}}\) because there are at most \(m\) possible items for \(b(a_2,S)\), and for each of them, there are at most \(2^{2 \rg^{\max}}\) possible sets for \(S\setminus D(a_2,b(a_2,S))\).
\begin{theorem} \label{theorem : rg}
Problem \(\MSA\) is solvable in \(O(m^5 2^{4 \rg^{\max}})\). As a result, \(\MSA\)  is FPT w.r.t. parameter  \( \rg^{\max}\). 
\end{theorem}
\begin{proof}
This result is a consequence of Lemma~\ref{lemma: solving graph} and the fact that \(|\DPV| \le m|\DPS| \le m^2 2^{2 \rg^{\max}}\).
\end{proof}

\textit{Remark.} Note that it is easy to prove that, in contrast, the problem is NP-hard even if the average range of the items is of 2.\footnote{ One can use a reduction with a sufficiently large number of dummy items ranked last and in the same positions by all agents.} Furthermore, Theorem \ref{theorem : n + rg} might seem less appealing as the \(\MSA\) problem is FPT w.r.t. parameter \(\rg^{\max}\) alone. However, we would like to stress that the time complexity of Theorem \ref{theorem : n + rg} might be more interesting than the one of Theorem \ref{theorem : rg} for a small number of agents.

\section{Hardness Results on the \(\MSA\) Problem} \label{Sect:MU}

We have seen in the last section that the \(\MSA\) problem is in XP w.r.t. parameters \(n\) and \(\mu(a_1)\) and that it is in FPT for parameter \(\rg^{\max}\). One could hope for more positive results for parameters \(n\) and \(\mu(a_1)\), as an FPT algorithm. However, we show in this section that the \(\MSA\) problem is W[1]-hard w.r.t. each of these two parameters.\\ 

We start with the hardness result on parameter \(\mu(a_1)\). In fact, we obtain a stronger result by proving that even determining if there exists a successfully manipulation is W[1]-hard w.r.t. \(\mu^{\max}\). Note that, by definition, \(\mu^{\max}\) is greater than or equal to \(\mu(a_1)\). 
\begin{theorem} \label{theorem: hardness}
Determining if there exists a successful manipulation for \(a_1\) is W[1]-hard w.r.t. parameter \(\mu^{\mathtt{max}}\). 
\end{theorem}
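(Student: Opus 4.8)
The plan is to give a parameterized reduction from a known W[1]-hard problem — the natural candidate is \textsc{Multicolored Clique} (or equivalently \textsc{Clique} itself), parameterized by the solution size $k$. From an instance with a graph $G = (V,E)$ and target clique size $k$, I would build an $\MSA$ instance in which the manipulator $a_1$ picks roughly $\binom{k}{2} + k$ times (or $O(k^2)$ times in any case), so that $\mu(a_1)$, and hence $\mu^{\max}$, is bounded by a function of $k$ only. The combinatorial core of the construction should encode the choice of $k$ vertices together with the $\binom{k}{2}$ edges among them: the manipulator must be forced to ``spend'' her picks both on vertex-gadget items and on edge-gadget items, and the success threshold should be reachable only if the chosen edge-items are exactly those spanned by the chosen vertex-items, i.e.\ only if the selected vertices induce a clique.

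First I would design the non-manipulators and the picking sequence so that, at carefully chosen positions in $\PickingSequence$, agent $a_1$ gets to pick, and the set of items still available to her at that moment is tightly controlled by what the greedy non-manipulators have already removed. The standard trick (as used in the NP-hardness proof of \cite{aziz2017complexity} and for the Bouveret--Lang-type ``which set can $a_1$ get'' characterization) is: to make a particular item $i$ available for $a_1$ to grab at time step $t$, one inserts enough non-manipulators, with suitable rankings, just before step $t$ so that every item $a_1$ would otherwise have taken earlier (or that a later non-manipulator wants) is cleared out in the right order. Concretely, I would have a block of picks devoted to ``selecting vertices'': $a_1$ chooses $k$ vertex-items, one per colour class, and this choice forces (via the non-manipulators reacting greedily) a specific subset of ``consistency items'' to disappear. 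Then a second block, of $\binom{k}{2}$ picks, is devoted to ``selecting edges'': for each pair of colour classes $a_1$ must pick one edge-item connecting those classes, and the non-manipulators are arranged so that an edge-item $e = \{u,v\}$ is only still available to $a_1$ if the vertex-items $u$ and $v$ were the ones selected in the first block. Utilities are assigned so that $a_1$ gets high value exactly when she fills her whole budget with a consistent (clique-encoding) selection, and $u_T$ — her truthful utility — is set just below that threshold, so that a successful manipulation exists iff $G$ has a $k$-clique.

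The key steps, in order, are: (1) recall/adapt the availability lemma (which items $a_1$ can be steered to receive, given a prefix of non-manipulator picks), so that I can reason cleanly about ``$a_1$ can pick item $i$ at step $t$ iff condition $X$ holds''; (2) define vertex-gadgets, edge-gadgets, colour classes, and the set of non-manipulators, verifying that the number of non-manipulators is polynomial and that $\mu^{\max} = \mu(a_1) = O(k^2)$ — this is the only place where I must be careful that no auxiliary agent is itself given many picks, which might otherwise be the actual heavy-picker; (3) build the picking sequence as ``(vertex-selection block) then (edge-selection block) then a tail'', interleaving non-manipulator picks so the greedy dynamics enforce consistency; (4) set utilities and the truthful ranking $\succ_T$ so that $u_T$ is strictly below the value of a full consistent bundle and strictly above the value of any inconsistent bundle $a_1$ could salvage; (5) prove both directions of correctness — a $k$-clique yields a manipulation reaching the threshold, and any manipulation beating $u_T$ must have picked a consistent vertex/edge selection, hence exhibits a $k$-clique.

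I expect the main obstacle to be step (3)–(4) together: making the greedy reactions of the non-manipulators enforce \emph{exactly} the clique-consistency constraint — that the $\binom{k}{2}$ edge-items $a_1$ grabs are precisely those induced by her $k$ vertex-items — without accidentally letting $a_1$ exploit some ``off-pattern'' combination of items whose total utility still exceeds $u_T$. This requires the utility values and the truthful bundle to be balanced so delicately that \emph{every} bundle of size $\mu(a_1)$ reachable by $a_1$ has utility $> u_T$ only in the intended clique case; ruling out all unintended high-value bundles (and showing the intended one is actually reachable given the Bouveret--Lang availability criterion) is the technically heaviest part of the argument. A secondary subtlety is keeping all gadget agents to a small number of picks so that the bound is genuinely on $\mu^{\max}$ and not merely on $\mu(a_1)$.
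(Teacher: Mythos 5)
There is a genuine gap: your plan names the right source problem and the right budget ($\mu(a_1)=O(k^2)$, all other agents picking $O(1)$ times), but the one thing that actually carries the reduction --- the gadget by which the greedy non-manipulators enforce clique-consistency --- is exactly the part you leave open, and the specific mechanism you sketch is the wrong way around. You want an edge-item $\{u,v\}$ to \emph{remain available} to $a_1$ in the second block only if she took the vertex-items for $u$ and $v$ in the first block. But in sequential allocation the manipulator's picks only \emph{remove} items, and removals make truthful agents cascade further down their lists and consume \emph{more} items; e.g.\ an agent ranked $b_u \succ g_{\{u,v\}} \succ \cdots$ eats $g_{\{u,v\}}$ precisely when $b_u$ \emph{was} selected, which is the opposite of what you need, and chaining through other agents' cascades only reproduces this monotonicity. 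So ``availability of the edge-items conditioned on the selected vertices'' is not something the greedy dynamics naturally give you, and no idea in the proposal overcomes this.

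The paper's construction sidesteps the issue: $a_1$ can \emph{always} take any $k$ ``best'' (vertex) items and then any $k(k-1)/2$ ``good'' (edge) items, because her edge block precedes the edge agents' picks. The clique test is instead whether one \emph{extra} medium item survives until a final single pick of $a_1$. Each vertex agent has $b_i \succ m_i$, so the selected vertices' medium items are consumed in the vertex round; each edge agent has $g_{\{i,j\}} \succ m_i \succ m_j \succ w_{\{i,j\}}$, so it spares the medium items iff its good item is still there or both $m_i,m_j$ are already gone; $|V|-k-1$ collector agents remove the remaining slack. Hence a medium item is left for $a_1$'s last pick iff the $k(k-1)/2$ good items she took are spanned by her $k$ chosen vertices, i.e.\ iff $G$ has a $k$-clique, with utilities $4/3/2/1$ making the gap exactly one unit. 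A second, smaller issue in your plan: you cannot ``set $u_T$ just below the threshold,'' since $u_T$ is determined by the instance itself; the paper handles the decision version by a case distinction (if the truthful run already reaches the high value, a clique exists; otherwise a successful manipulation exists iff a clique exists). Without a concrete replacement for the consistency gadget (or an argument of the paper's reward-item type), the proposal does not yet constitute a proof.
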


\begin{proof}
We make a parameterized reduction from the CLIQUE problem where given a graph \(G = (V,E)\) and an integer \(k\), we wish to determine if there exists a clique of size \(k\). This problem is W[1]-hard w.r.t. parameter \(k\). W.l.o.g., we make the assumptions that \(|V| > k\) and that \(|E| > k(k-1)/2\) (because otherwise it is trivial to determine if there is a clique of size \(k\)).\\ 

From an instance of CLIQUE, we create the following \(\MSA\) instance.\\

\emph{Set of items.} We create two items, \(g_{\{i,j\}}\) (a good item) and \(w_{\{i,j\}}\) (one of the worst items), for each edge \(\{i,j\}\in E\) and two items, \(b_{i}\) (one of the best items) and \(m_i\) (a medium item), for each vertex \(i\in V\). Put another way, \(\ItemSet = \{g_{\{i,j\}},w_{\{i,j\}}|\{i,j\}\in E\}\cup\{b_i,m_i| i \in V\}\) and the number of items is thus \(|I| = 2|V| + 2|E|\).\\

\emph{Set of agents.} We create one agent \(e_{\{i,j\}}\) for each edge \(\{i,j\}\in E\) and one agent \(v_{i}\) for each vertex \(i\in V\). The top of \(e_{\{i,j\}}\)'s ranking is \(g_{\{i,j\}} \succ m_i \succ m_j \succ w_{\{i,j\}}\) (which one of \(m_i\) or \(m_j\) is ranked first can be chosen arbitrarily). The top of \(v_{i}\)'s ranking is \(b_{i} \succ m_i\) . We also create \(|V| - k - 1\) agents \(c_t\) for \(t\in \{1,\ldots, |V| - k - 1\}\) (whose role is to collect medium items) such that the top of the ranking of each \(c_t\) is \(m_1\succ m_2\ldots \succ m_{|V|}\). Last, the manipulator, that we denote by \(a_1\) to be consistent with the rest of the paper,
has the following preferences: he first ranks items \(b_{i}\), 
then items \(g_{\{i,j\}}\), then items \(m_i\) and last items \(w_{\{i,j\}}\). To summarize, \(\AgentSet = \{e_{\{i,j\}}|\{i,j\}\in E\}\cup\{v_i|i\in V\}\cup\{c_t|t\in \{1,\ldots, |V| - k - 1\}\}\cup\{a_1\}\) and there are \(|\AgentSet| = 2|V|+|E|-k\) agents.\\

\emph{Picking sequence.} The picking sequence \(\PickingSequence\) is composed of the following rounds: 
\begin{itemize}
    \item Manipulator round 1:  \(a_1\) gets to pick \(k\) items.
    \item Vertex round: each agent \(v_{i}\) gets to pick one item.
    \item Manipulator round 2:  \(a_1\) gets to pick \(k(k-1)/2\) items.
    \item Edge round: each agent \(e_{\{i,j\}}\) gets to pick one item.
    \item Medium item collectors round: each agent \(c_t\) gets to pick one item.
    \item Manipulator round 3:  \(a_1\) gets to pick one item.
    \item End round: the remaining items can be shared arbitrarily within the non-manipulators such that each of them gets at most one new item. 
\end{itemize}
Note that \(\mu^{\mathtt{max}} = \mu(a_1) = \frac{k(k+1)}{2} + 1\).\\

\emph{Utility values of \(a_1\).} For ease of presentation, we will act as if there were only four different utility values, even if \(a_1\) is asked to report a complete preference order. One can remove this assumption, making preferences strict, by using sufficiently small \(\epsilon\) values. 
In this sketch of proof, each item \(b_{i}\) has utility \(4\). Each item \(g_{\{i,j\}}\) has utility \(3\). Each item \(m_i\) has utility \(2\). Lastly, items \(w_{\{i,j\}}\) have utility \(1\). In this simplified setting, we set \(\succ_T\) as being one specific ranking consistent with the utility values of \(a_1\) and we wish to determine if there exists another ranking yielding a strictly higher utility.\\

\emph{Sketch of the proof.} At the end of the vertex round, all the best items are gone, as they have already been picked by \(a_1\) or by the vertex agents $v_i$. Similarly, at the end of the edge round, none of the good items are left. Hence, at the third manipulator round, when \(a_1\) picks her last item, the best she can hope for is a medium item. Consequently, the maximum utility she might achieve is accomplished by picking $k$ best items in her first round, $k(k-1)/2$ good items in her second round, and finally a medium item in her third round, for an overall utility of $4k + 3k(k-1)/2 + 2$. Note that she can always pick any set of \(k\) best items in her first round and then (whatever the previous \(k\) best items) pick any set of \(k(k-1)/2\) good items in her second round. Hence obtaining an overall utility of $4k + 3k(k-1)/2 + 1$ is always possible. Note also that, if $\{b_{i_1},\ldots,b_{i_k}\}$ are the $k$ best items selected by \(a_1\) at the first round, then in the vertex round the vertex agents $\{v_{i_1},\ldots,v_{i_k}\}$ will pick the medium items $\{m_{i_1},\ldots,m_{i_k}\}$. Moreover, before the third manipulator round, agents $c_t$ will pick additional $|V|-k-1$ medium items. So, a medium item is left at the third manipulator round only if none of the edge agents picks a medium item in the edge round. According to her preference ranking, any such agent $e_{\{i,j\}}$ will not pick a medium item iff $g_{\{i,j\}}$ is still available, or if $g_{\{i,j\}}$, $m_i$ and $m_j$ have all already been picked. 
If $g_{\{i,j\}}$ is one of the $k(k-1)/2$ good items that have been already picked at the manipulator second round, then $m_i$ and $m_j$ have already been picked before by $v_i$ and $v_j$
in the vertex round, if $b_i$ and $b_j$ were already taken in the first manipulator round. In conclusion, none of the medium items are picked by the edge agents iff the $k(k-1)/2$ edges $e_{\{i,j\}}$ for which $g_{\{i,j\}}$ has already been picked at the second manipulator round have as endpoints only nodes in $\{v_{i_1},\ldots,v_{i_k}\}$, and this is possible iff $\{v_{i_1},\ldots,v_{i_k}\}$ forms a clique in the initial graph $G$. Summarizing, there exists a strategy for \(a_1\) achieving an overall utility of $4k + 3k(k-1)/2 + 2$ iff $G$ has a clique of $k$ nodes. 
It remains to show that we could solve the CLIQUE problem if we could determine if there exists a successful manipulation. This fact results from the following disjunction of two cases: If \(u_T = 4k + 3k(k-1)/2 + 2\), then we can conclude that there exists a clique of size \(k\); Otherwise, if \(u_T = 4k + 3k(k-1)/2 + 1\), then there exists a clique of size \(k\) iff there exists a successful manipulation for \(a_1\). 
\end{proof}

\textit{Remark:} Aziz et al.~\cite{aziz2017complexity} considered a sequential allocation setting in which the manipulator has a binary utility function but is asked to provide a complete preference order. In this setting, the manipulation problem consists in finding a ranking maximizing the utility of the bundle she gets. While the authors showed that this problem can be solved in polynomial time, the reduction used in the sketch of the proof of Theorem~\ref{theorem: hardness} shows that this problem is NP-hard if the manipulator has a utility function involving four different values (instead of two).\\

 Similarly, we obtain that the \(\MSA\) problem is W[1]-hard w.r.t. the number of agents.

\begin{theorem} \label{theorem: hardness on n}
$\MSA$ is W[1]-hard w.r.t. the number of agents. 
\end{theorem}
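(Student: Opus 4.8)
The plan is to reduce from a W[1]-hard problem whose parameter naturally translates into a bounded number of agents; the natural candidate is again CLIQUE (parameterized by the clique size $k$), or, more conveniently, MULTICOLORED CLIQUE or a related problem such as $k$-CLIQUE on a graph whose vertices are partitioned into $k$ color classes. The key difficulty, compared with Theorem \ref{theorem: hardness}, is that now we must encode a graph $G=(V,E)$ using only $f(k)$ agents, so we cannot afford one agent per edge or per vertex. Instead, the idea is to have a constant (in $k$) or $f(k)$ number of \emph{selector} agents and \emph{verifier} agents, and to push the combinatorial content of $G$ into the \emph{preference rankings} of these few agents and into the \emph{utility values} of the manipulator. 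Concretely, I would use $O(k^2)$ agents: $k$ agents that (through the manipulator's choices) effectively commit to picking one vertex from each color class, and $\binom{k}{2}$ agents, one per pair of color classes, each of whose ranking encodes the adjacency structure between those two classes so that the agent is forced to pick a ``bad'' item unless the two previously chosen vertices are actually adjacent in $G$.

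The key steps, in order, would be: (i) Fix the gadget structure — items of several tiers (``best'', ``good'', ``medium'', ``worst'' as in the previous proof), a picking sequence organized into rounds alternating manipulator rounds with rounds of the selector/verifier agents, so that after each manipulator round a well-defined set of ``choices'' is frozen. (ii) Design the manipulator's utility values and the truthful ranking $\succ_T$ so that $u_T$ equals a threshold achievable without any structural constraint, and so that beating $u_T$ by one unit is possible iff every verifier agent can be ``satisfied'' simultaneously, which happens iff the $k$ selected vertices form a clique. (iii) Encode adjacency in the verifier agents' rankings: agent for color pair $(p,q)$ ranks, just above its ``medium'' fallback item, exactly those pair-items $\{u,v\}$ with $u$ in class $p$, $v$ in class $q$, and $\{u,v\}\in E$, in such a way that it picks a harmless item precisely when the manipulator's earlier choices in classes $p$ and $q$ are consistent and adjacent. (iv) Count agents and items: verify $n = k + \binom{k}{2} + O(1) = f(k)$ (plus possibly $O(1)$ ``collector'' agents to absorb leftover medium items, exactly as the $c_t$ agents do in Theorem \ref{theorem: hardness}), and verify the reduction is computable in polynomial time. (v) Prove correctness in both directions: a clique yields a manipulation of value $u_T+1$; conversely any manipulation of value $>u_T$ forces, round by round, a consistent choice of one vertex per color class all of whose pairwise verifier agents are satisfied, hence a clique. (vi) Finally, as in Theorem \ref{theorem: hardness}, remove the assumption of only four distinct utility values by perturbing with sufficiently small $\varepsilon$'s, and, if the reduction as stated compares against a \emph{specific} $\succ_T$, handle the case $u_T = u_T+1$ already by a short case disjunction (either $u_T$ already realizes the top value, giving a YES instance of CLIQUE directly, or a successful manipulation exists iff $G$ has a $k$-clique).

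The main obstacle I expect is step (iii) together with step (v): making the verifier agents' rankings enforce adjacency \emph{without} needing more than $O(k^2)$ agents, while also ensuring the round-by-round argument in the hard direction is airtight — in particular that the manipulator cannot ``cheat'' by picking unexpected items in a manipulator round to desynchronize the selector agents, and that the accounting of medium items (how many survive to the last manipulator round) pins down exactly the clique condition. This is the analogue of the delicate middle portion of the proof of Theorem \ref{theorem: hardness}, but with the extra constraint that each gadget must now ``serve'' many edges at once, so the preference rankings carry the whole adjacency matrix rather than a single edge. I would therefore expect the construction to mirror the one for $\mu^{\max}$ closely, with the per-edge and per-vertex agents replaced by $O(k^2)$ ``aggregated'' agents, and the bulk of the writing going into verifying the hard direction.
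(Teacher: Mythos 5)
Your plan correctly identifies the right source problem (MULTICOLORED CLIQUE parameterized by $k$) and the right agent budget — a per-color family plus a per-pair-of-colors family of agents, $O(k^2)$ in total — and this matches the skeleton of the paper's construction. However, the heart of the proof is exactly the part you defer as the "main obstacle" in steps (iii) and (v), and the mechanism you sketch for it does not work as stated. If the verifier agent for a color pair $(p,q)$ simply ranks one item per edge between the classes above a fallback item, it will greedily pick its top remaining item; for it to end up at (or just past) the item corresponding to the selected pair, \emph{all} edge-items ranked above that one must already be gone, and nothing in your construction says who removes exactly that prefix. Worse, nothing enforces \emph{consistency}: with one agent per pair of classes there is no visible device preventing the manipulator from desynchronizing her choices so that the pair $(p,q)$ verifier "sees" vertex $u$ in class $p$ while the pair $(p,q')$ verifier sees a different vertex $u'$. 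You name this risk but do not resolve it, and resolving it is precisely the nontrivial content of the theorem.

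The paper's proof closes this gap with a numerical encoding that your sketch lacks. The choice of a vertex of color $j$ is encoded as the \emph{number} $nb_j$ of items the manipulator picks in a large per-color block $B_j$; the $k+1$ agents tied to color $j$ (two "color" agents $c_j,\overline{c}_j$ and the $k-1$ pair agents $p_{j,r}$) all consume $B_j$ in lockstep subrounds, so each of them spills over into its own $Id$-set by roughly $nb_j/(k+1)$ items — the floor/ceiling mismatch between $c_j$ and $\overline{c}_j$ is what forces $nb_j$ to be a multiple of $k+1$, and carefully placed negative utilities force $nb_j=(k+1)\cdot id(i_j)$ for an actual vertex $i_j$ of color $j$. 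Consistency across pairs is automatic because the \emph{same} count $nb_j$ drives every agent of color $j$. Adjacency is then checked by the two pair agents $p_{j,r},p_{r,j}$ jointly consuming $id(i_j)+id(i_r)$ items of $Id_{\{j,r\}}$, and a Sidon sequence (Erd\H{o}s--Tur\'an) guarantees this sum uniquely identifies the pair, so utilities can mark exactly the sums corresponding to edges of $G$. Without this counting-plus-Sidon idea (or an equivalent device), your reduction cannot encode the whole adjacency matrix into $f(k)$ agents' rankings while keeping the greedy picking dynamics and the manipulator's possible deviations under control, so the proposal as written has a genuine gap at its core step.
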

\begin{proof}
We design a parameterized reduction from MULTICOLORED CLIQUE. In this problem, given a graph \(G = (V,E)\) with vertex set $V=\{v_1,\ldots,v_n\}$, an integer \(k\), and a vertex coloring \(\phi: V \rightarrow \{1,\ldots,k\}\), we wish to determine if there exists a clique of size \(k\) in \(G\) containing exactly one vertex per color. MULTICOLORED CLIQUE is known to be W[1]-hard w.r.t. parameter \(k\) \cite{fellows2009parameterized}. \\

\emph{Idea of the proof}: We resort on the nice mathematical tool of Sidon sequences. These sequences associate to each number $i$ in $\{1,\ldots, n\}$ a value $id(i)$ such that, for every pair $(i,l)$ with \(i\leq l\), the sum $id(i)+ id(l)$ is different from the one of any other different pair of elements in $\{1,\ldots, n\}$. We use the construction of Erd\"os and Tur\`an \cite{erdos1941problem}, by setting \(id(i) = 2pi + (i^2 \mod p)\) for every \(i \in \{1,\ldots,n\}\), where $p$ is the smallest prime number greater than $n$. Notice that, by the Bertrand-Chebyshev theorem \cite{chebyshev1852memoire}, \(p < 2n\), and thus $id(i)=O(n^2)$. This sequence will be used in the following way. We create a large set of items $B_j$ for each color $j$. In the first picking round, the manipulator will be able to pick a large number of items within these sets. To recover a solution of the MULTICOLORED CLIQUE problem, we will show that, if a multicolored clique $\{v_{i_1},\ldots,v_{i_k}\}$ exists in which each vertex  $v_{i_j}$ has color $\phi(v_{i_j})=j$, then in an optimal manipulation the manipulator should pick exactly $(k+1) \cdot id(i_j)$ items in each set $B_j$. The edges of the clique will then be identified by the sums $id(i_j)+id(i_r)$ for all pairs of vertices $\{v_{i_j},v_{i_r}\} \subset \{v_{i_1},\ldots,v_{i_k}\}$. \\  

From a MULTICOLORED CLIQUE instance \((G=(V,E),k,\phi)\), we construct the following \(\MSA\) instance.\\

\emph{Set of items}:
\begin{itemize}
    \item For each color $j$, we create a set $B_j$ of \((k+1)\cdot id(n)\) items, and two sets $Id_j$ and $Id_{\overline{j}}$ of $id(n)+2$ items each.  
    The purpose of items in $Id_j \cup Id_{\overline{j}}$ is to ensure that the number of items picked by \(a_1\) in \(B_j\) is of the form $(k+1) \cdot id(i)$ such that $\phi(v_i) = j$. 
    \item For each pair of colors $\{j,r\}$ with $j \neq r$, we create a set $Id_{\{j,r\}}$ of $2 \cdot (id(n) +1)$ items. 
    The purpose of  items in $Id_{\{j,r\}}$ is to ensure that, whenever \(a_1\) picks $(k+1) \cdot id(i)$ items in $B_j$ and $(k+1) \cdot id(l)$ items in $B_r$ for two given vertices $v_i$ and $v_l$ of colors $\phi(v_i) = j$ and $\phi(v_l) = r$, then $\{v_i,v_l\}$ is an edge of $G$. 
    \item We create a set $D$ of $k(k+1)\cdot id(n)$ items. 
    In a first picking round, the manipulator will be able to pick items in $\bigcup_{j=1}^k B_j \cup D$. The purpose of items in $D$ is to make it possible for \(a_1\) to adjust the number of items she picks in $\bigcup_{j=1}^k B_j$.  
    \item Last, we add a set $Z$ of $2k(k+1)id(n)$ items. 
    Set $Z$ will be used as a buffer of items where each non-manipulator will pick when no items in $Id_j$, $Id_{\overline{j}}$, or $Id_{j,r}$ are left, so as to avoid mutual conflicts.
\end{itemize}

\emph{Set of agents}: We create two agents $c_j$ and $\overline{c}_j$ per color $j$, and two agents $p_{j,r}$ and $p_{r,j}$ for each pair of colors $\{j,r\}$ such that \(j\neq r\). Moreover, we create one agent denoted by $d$ and one manipulator $a_1$. In total, there are $k(k+1)+2$ agents. We now detail the top of the preference rankings of non-manipulators, where by abuse of notations, we use $S\succ S'$ to denote the fact that items in $S$ are ranked before the ones in $S'$, while the order inside each set is indifferent.
\begin{itemize}
    \item Agent $c_j$, for  $1 \leq j \leq k$: $B_j \succ Id_j \succ Z \succ \ldots$.
    \item Agent $\overline{c}_j$, for $1 \leq j \leq k$: $B_j \succ Id_{\overline{j}} \succ Z \succ \ldots$.
    \item Agent $p_{j,r}$, for $1 \leq j \neq r \leq k$: $B_j \succ  Id_{\{j,r\}} \succ Z \succ \ldots.$
    \item Agent $p_{r,j}$ for $1 \leq j \neq r \leq k$: $B_r \succ Id_{\{j,r\}} \succ Z \succ \ldots.$
    \item Agent $d$: $D \succ Z \succ \ldots$.
\end{itemize}
As an important remark, notice that agents $p_{j,r}$ and $p_{r,j}$ rank items in $Id_{\{j,r\}}$ identically.\\

\emph{Picking sequence}: \(\PickingSequence\) is composed of the following rounds: 
\begin{itemize}
    \item Manipulator round 1:  \(a_1\) gets to pick \(k(k+1) \cdot id(n)\) items. 
    \item Non-manipulators round: 
    \begin{itemize}
        \item Agents in $\AgentSet \setminus \{a_1,d\}$ pick in \(id(n)\) subrounds. In each subround, each of them picks exactly one item in the following order: agents \(c_{j}\), $1 \leq j \leq k$, are the first pickers, then come agents $p_{j,r}$,  
        and lastly agents   \(\overline{c}_j\). 
        \item Finally, agent $d$ picks $k(k+1) \cdot id(n)$ items.
    \end{itemize}
    \item Manipulator round 2:  \(a_1\) gets all remaining items.
\end{itemize}

\emph{Utility values of \(a_1\)}: For ease of presentation, we use two simplifying assumptions. First, we act as if different items can have the same utility value for $a_1$. This assumption can be removed making preferences strict by adding sufficiently small \(\epsilon\) values. Second, we use negative utilities. In fact, one can recover an equivalent instance with only non-negative values by adding to all the utilities the absolute value of the minimal one. Indeed, this would not change the set of optimal solutions as the size of \(a_1\)'s bundle is fixed by \(\pi\).
\begin{itemize}
    \item Items in $Z$ have a utility value of $0$.
    \item One specific item in each set $B_j$, that we denote by $b_j^*$, has a utility value of $4\alpha$ where $\alpha = (id(n) + 2)k(k+1)$. All items in $(\bigcup_{j=1}^k B_j \cup D)\setminus\{b_1^*,\ldots,b_k^*\} $ have a utility value equal to $2\alpha$. 
    \item The utilities of the items in the sets $Id_j$ and $Id_{\overline{j}}$ are defined as follows. Index the items in $Id_j$ (resp. $Id_{\overline{j}}$) from $1$ to $id(n)+2$ according to the preference order of agent $c_j$ (resp. $\overline{c}_j$). Furthermore, let $\Tau_{j} = \{id(i)| \phi(v_i) = j\}$, $\tau_{j}(t)$ denote the $t^{th}$ smallest value in $\Tau_{j}$, and $T_{j} = |\Tau_{j}|$. We also set $\tau_{j}(0) = 0$ and $\tau_{j}(T_{j} + 1) = id(n)+2$. Then, all items receive a utility value of $1$, except for the items of indices $\tau_{j}(t)$ for $t\in \{1,\ldots,T_{j}+1\}$, that get utility $\tau_{j}(t-1) - \tau_{j}(t) + 1$. Notice that, for every $t$ such that $1 \leq t \leq T_{j} + 1$, by definition the sum of the utilities of all the items from 
    $\tau_{j}(t-1)+1$ to $\tau_{j}(t)$ is $0$.

    \item Similarly, the utilities of the items in each $Id_{\{j,r\}}$ are set in the following manner. Index these items from $1$ to $2id(n) + 2$ according to the preference order of agents $p_{j,r}$ and $p_{r,j}$. Furthermore, let $\Tau_{j,r} = \{id(i) +id(l)| \phi(v_i) = j, \phi(v_l) = r, \{v_i,v_l\} \in E\}$, $\tau_{j,r}(t)$ denote the $t^{th}$ smallest value in $\Tau_{j,r}$, and $T_{j,r} = |\Tau_{j,r}|$. We also set $\tau_{j,r}(0) = 0$ and $\tau_{j,r}(T_{j,r} + 1) = 2id(n)+2$. Then all items receive a utility value of $1$, except items of index $\tau_{j,r}(t)$ for $t\in \{1,\ldots,T_{j,r}+1\}$, whose utility is set to $\tau_{j,r}(t-1) - \tau_{j,r}(t) + 1$.
\end{itemize}

As we are going to show below, in an optimal manipulation, the agents behave as follows. In the first manipulator round, $a_1$ picks $k(k+1)\cdot id(n)$ items in $\bigcup_j B_j \cup D$. Then, in the non-manipulators round, agents $c_j$, $p_{j,r}$ and $\overline{c}_j$ for the different values $j$ and $r \neq j$ pick the remaining items in the sets $B_j$, plus other items in $Id_j$, $Id_{\overline{j}}$ and $Id_{\{j,r\}}$. Subsequently, $d$ takes all the remaining items in $D$ and further ones to complete her picks in $Z$. Finally, in the second manipulator round, $a_1$ collects all remaining items.\\ 

\emph{Sketch of the proof.} We first claim that, in the first manipulator round, \(a_1\) should pick only items in $\bigcup_j B_j \cup D$. 
Indeed, after the non-manipulators round, none of these items is left, whatever \(a_1\) has previously picked. In particular, the items left by $a_1$ in each set $B_j$ are collected by agents $c_j$, $p_{j,r}$ and \(\overline{c}_j\), while the ones in $D$ are collected by agent $d$. Moreover, because $|\bigcup_j (Id_j\cup Id_{\overline{j}}) \cup \bigcup_{j\neq r} Id_{j,r}|$ is upper bounded by $\alpha$ and of the utility function we have set, any subset of items in $\ItemSet \setminus (\bigcup_j B_j \cup D)$ as a utility value which is strictly less than $\alpha$ and strictly greater than $-\alpha$. As a result, because each item in $\bigcup_j B_j \cup D$ is worth $2\alpha$, any solution which would not pick only items in $\bigcup_j B_j \cup D$ in the first manipulator round could be improved by doing so.   
Using the same type of argument, we also claim that $a_1$ should pick all of the $b_j^*$ items in her first picking round. We will hence restrict our attention to picking strategies that verify these two assumptions. Under such an hypothesis, the best utility value $a_1$ can hope to get from the set of items she collects in her second picking round is $0$. This is induced by the utility values that we have set, as well as by the truthful picking strategies of non-manipulators. Indeed, note that by construction the overall utility of the set $\ItemSet \setminus (\bigcup_j B_j \cup D)$ is 0. Moreover, as sets $Id_j$, $Id_{\overline{j}}$ and $Id_{\{j,r\}}$ are indexed according to the preference orders of agents $c_j$, $\overline{c}_j$, $p_{j,r}$ and $p_{r,j}$, at the end of the non-manipulators round only prefixes of such sets have been picked. Hence, recalling that all the items in $Z$ have null utility for $a_1$, the overall utility of items left to $a_1$ at the beginning of the second manipulator round is $0$ if and only if the prefixes of the already picked items in all the sets $Id_j$, $Id_{\overline{j}}$ and $Id_{\{j,r\}}$ end up to items of negative value for $a_1$. 
We now argue that this can happen if and only if there exists a multicolored clique $G$.

Let us first show the only if direction, i.e., that if $a_1$ gets an overall utility equal to $0$ from the set of items she collects in her second picking round, then there is a multicolored clique of size $k$ in $G$. Let us denote by $nb_j$ the number of items that $a_1$ has picked in $B_j$ during the first manipulator round. Since for each $j\in \{1,\ldots,k\}$ agent $a_1$ has picked $b_j^*$ and $|B_j|=(k+1)id(n)$, we have that $1\le nb_j \le (k+1)id(n)$. We first show that $nb_j$ should be a multiple of $k+1$. 

To this aim, let us first observe that, for each $j\in \{1,\ldots,k\}$, after the first manipulator round, in every non-manipulators subround, $k+1$ items of $B_j$ (if still available) are picked by the $k+1$ agents $c_j$, $p_{j,r}$ with $j\neq r$ and $\overline{c}_j$ (in this order).
Therefore, at the end of the non-manipulators rounds,  $c_j$ has picked $\lfloor nb_j/(k+1)\rfloor$ items in $Id_{j}$ and $\overline{c}_j$ has picked $\lceil nb_j/(k+1)\rceil$ items in $Id_{\overline{j}}$. But then, if $nb_j$ is not a multiple of $k+1$, these two numbers are different and thus the last items picked by $c_j$ in $Id(j)$ and by $\overline{c}_j$ in $Id_{\overline{j}}$ cannot both have negative utility for $a_1$, because the difference between two consecutive $id$ values is strictly greater than $1$. Therefore, each $nb_j$ should be of the form $nb_j = (k+1) \cdot id(i_j)$ for some $i_j\in \{1,\ldots,n\}$ such that $\phi(v_{i_j}) = j$, so that both $c_j$ and $\overline{c}_j$ pick $id(i_j)$ items in $Id_j$ and $Id_{\overline{j}}$, respectively.
In order to show that $\{v_{i_j}|1\le j\le k\}$ is a multicolored clique, it remains to prove that all the vertices of this set are neighbors in $G$. Indeed, since in each subround of the non-manipulators round every time $c_j$ picks in $Id(j)$ each agent $p_{j,r}$ picks in $Id_{\{j,r\}}$, 
at the end of the non-manipulators round \(p_{j,r}\) and \(p_{r,j}\) have picked $id(i_j)+id(i_r)$ items in $Id_{\{j,r\}}$. Since in order for $a_1$ to achieve an overall utility equal to $0$ in the second manipulator round the last item previously picked in $Id_{\{j,r\}}$ must have a negative utility, \(\{v_{i_j}, v_{i_r}\}\) must be an edge of \(G\).   

It remains to show the if direction, i.e., that if there is a multicolored clique $\{v_{i_1},\ldots,v_{i_k}\}$ in $G$, then there exists a strategy leading $a_1$ to reach overall utility $0$ in her second manipulation round. Assuming without loss of generality that $\phi(v_{l_j}) = j$, this can be accomplished by letting $a_1$ pick $nb_j = (k+1) \cdot id(i_j)$ items in $B_j$, $1 \leq j \leq k$, and the remaining items in $D$. Then, each $c_j$ (resp. $\overline{c}_j$) will pick
$id(i_j)$ items in $Id(j)$ (resp. $Id_{\overline{j}}$) and each $p_{j,r}$ will pick $id(i_j)$ items in $Id_{\{j,r\}}$, which causes $a_1$ to achieve overall utility $0$ in her second manipulator round, finally proving the claim.
\end{proof}

Consequently from Theorems~\ref{theorem: hardness} and \ref{theorem: hardness on n}, it is unlikely that the \(\MSA\) problem admits FPT algorithms w.r.t. parameters \(\mu(a_1)\) and \(n\). Hence, these results valorize the XP results on these parameters obtained in Section~\ref{Sect:positive}, as well as Theorem~\ref{theorem n + mu}, which interestingly shows that the \(\MSA\) problem is FPT w.r.t. parameter \(\mu(a_1) + n\).

\section{An Upper Bound on the Optimal Value of \(\MSA\)} \label{section : bound}
Our initial hope was that computational complexity could be a barrier to manipulating sequential allocation. Unfortunately, we have seen in Section~\ref{Sect:positive} that the \(\MSA\) problem can be solved efficiently for several subclasses of instances. Another reason that could push agents towards behaving truthfully could be that it would not be worth it. Indeed, if the increase in utility that an agent can get by manipulating is very low, she might be reluctant to gather the necessary information and do the effort of looking for a good manipulation.  We provide the following tight bound on this issue.

\begin{theorem} \label{theorem : bound}
The manipulator cannot increase her welfare by a factor greater than or equal to 2, i.e., \(\max_{\succ} u(\AllocationFunction_{\succ}(a_1)) < 2 u_T \)  and this bound is tight.
\end{theorem}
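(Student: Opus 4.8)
The plan is to prove the upper bound $\max_{\succ} u(\AllocationFunction_{\succ}(a_1)) < 2u_T$ by a pairing/charging argument, and then to construct a family of instances showing the factor $2$ cannot be improved. For the upper bound, fix an optimal manipulation $\succ^*$ and let $B^* = \AllocationFunction_{\succ^*}(a_1)$ be the bundle it yields, with $|B^*| = \mu(a_1) =: t$. Write $B^* = \{o_1^*, \ldots, o_t^*\}$ ordered by the time steps at which $a_1$ picks them in the run of $\PickingSequence$ under $\succ^*$; similarly let $T = \{o_1, \ldots, o_t\}$ be $a_1$'s truthful bundle ordered by her picking times. The key structural fact I would establish is that for each $r \in \{1,\ldots,t\}$, when $a_1$ truthfully picks her $r$-th item $o_r$, the item $o_r$ is at least as good (for $a_1$) as \emph{some} item of $B^*$ that is still unallocated at that moment. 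Concretely, I would argue that at the time $a_1$ makes her $r$-th truthful pick, at most $r-1$ items of $B^*$ can have already been removed from the pool, so at least one of $o_r^*, \ldots, o_t^*$ (in fact at least $t-r+1$ of the items of $B^*$) is still available; since $a_1$ greedily takes her favourite available item truthfully, $u(o_r) \geq \min\{u(o) : o \in B^* \text{ still available}\} \geq u(o_{?}^*)$ for a suitable index. The cleanest packaging is: there is an injection $\sigma$ from $B^* \setminus \{o_{\max}^*\}$ (the bundle minus its single most valuable item) into $T$ with $u(\sigma(o)) \geq u(o)$ for all $o$ in the domain. Then
\[
u(B^*) = u(o_{\max}^*) + \sum_{o \in B^* \setminus \{o_{\max}^*\}} u(o) \;\le\; u(o_{\max}^*) + u(T) \;<\; 2u_T,
\]
where the last inequality uses $u(o_{\max}^*) \le u_T$ — which holds because $a_1$'s truthful bundle is nonempty and contains her globally favourite available-at-step-one item, so $u_T \ge u(i^{\max})$ where $i^{\max}$ is $a_1$'s top item overall, and $u(o_{\max}^*) \le u(i^{\max})$.

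The main obstacle is making the "at most $r-1$ items of $B^*$ gone" claim precise and correct. The subtlety is that the truthful run and the manipulated run are different processes, so one cannot directly compare item pools timestep by timestep. I would handle this by comparing the truthful run against the run under $\succ^*$: at each time step, consider the multiset of items picked so far in each run. A monotonicity/coupling argument — akin to Lemma~\ref{lemma: n + mu} in the paper, which shows non-manipulator picks are "stable" up to the manipulator's interference — should show that after any prefix of $\PickingSequence$, the set of items picked in the truthful run and the set picked in the manipulated run differ in a controlled way: the truthful run has "caught up" on everything except possibly the items $a_1$ diverted. Formally, I would prove by induction on time steps $s$ that the set $X_s$ of items picked in the first $s$ steps of the truthful run satisfies $|X_s \setminus Y_s| \le (\text{number of items } a_1 \text{ has picked in the manipulated run by step } s)$, where $Y_s$ is the corresponding set for the manipulated run, using that a non-manipulator picks identically in both runs unless her target was already taken. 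Applying this at the step where $a_1$ makes her $r$-th truthful pick then yields that the manipulated bundle's items, minus at most $r-1$ of them, are still in the truthful pool, giving the desired valuation comparison.

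For tightness, I would exhibit a parameterized family with $u(\AllocationFunction_{\succ}(a_1))/u_T \to 2$. A clean construction: take $n = 2$ agents and $2N$ items; $a_2$'s ranking and $a_1$'s truthful ranking are "interleaved adversarially" so that truthfully $a_1$ gets $N$ low-value items while a manipulation lets her grab $N$ high-value items, with a single dominating item and values chosen (e.g., geometrically, or one huge item plus many tiny ones) so the ratio approaches $2$ but never reaches it. For instance, with utilities $u = (2^{N}, 1, 1, \ldots, 1)$-style gadgets one gets $u_T$ just above $2^{N-1}$ worth and the optimal manipulation just below $2^N$ worth; I would pick the picking sequence $\PickingSequence = (a_1, a_2, a_1, a_2, \ldots)$ and preference rankings so that behaving truthfully forfeits the top item, whereas a one-step lie secures it together with essentially everything $a_1$ would have gotten anyway minus one item. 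I would then verify the ratio explicitly and let $N \to \infty$. This last step is routine once the inequality direction is nailed down; the only care needed is to respect strict preferences (perturb with $\epsilon$'s) and to confirm the constructed lie is actually realizable via Fact~1.
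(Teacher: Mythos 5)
Your upper-bound argument has a genuine gap at its central structural claim. You assert that when $a_1$ makes her $r$-th truthful pick, at most $r-1$ items of $B^*$ have already left the pool, and from this you deduce a value-non-decreasing injection from $B^*\setminus\{o^*_{\max}\}$ into $T$. The claim is false: the paper's own running example ($\pi=(a_1,a_2,a_3,a_1)$, truthful bundle $T=\{i_1,i_4\}$, optimal manipulated bundle $B^*=\{i_2,i_3\}$) violates it at $r=2$, since before $a_1$'s second truthful pick both $i_2$ and $i_3$ have already been taken (by $a_3$ and $a_2$), i.e.\ $2>r-1$ items of $B^*$ are gone. The coupling lemma you invoke ($|X_s\setminus Y_s|\le$ number of manipulator turns among the first $s$ steps) is correct, but it only yields ``at most $2(r-1)$ items of $B^*$ missing from the truthful pool'': besides $|X_s\setminus Y_s|\le r-1$ you must count the up to $r-1$ items of $B^*$ that $a_1$ has herself already picked in the manipulated run --- these lie in $Y_s$, hence never appear in $X_s\setminus Y_s$, yet they may also have been taken (by non-manipulators) in the truthful run. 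So the injection, equivalently $v_r\ge b_{r+1}$ for the sorted utility vectors $v$ of $T$ and $b$ of $B^*$, is not established, and for $\mu(a_1)\ge 4$ nothing in your argument excludes $v_3<b_4$. A second, smaller error: $u_T\ge u(i^{\max})$ for $a_1$'s globally best item $i^{\max}$ is false whenever a non-manipulator picks before $a_1$'s first turn and takes $i^{\max}$; the inequality you actually need is $u(o^*_{\max})\le u(x_1)\le u_T$, which holds because the set of items picked before $a_1$'s first turn is the same in the truthful and manipulated runs and contains no item of $B^*$.

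Your line of attack can be repaired, but with different bookkeeping than you wrote: from ``at most $2(r-1)$ items of $B^*$ gone'' and the greediness of truthful picks you get $v_r\ge b_{2r-1}\ge b_{2r}$, and charging the two manipulated items $b_{2r-1},b_{2r}$ to the single truthful item $v_r$ gives $u(B^*)\le 2u_T$; strictness then follows because distinct items have distinct utilities (so $b_2<b_1\le v_1$ when $\mu(a_1)\ge 2$, the case $\mu(a_1)=1$ being trivial). This is a genuinely different route from the paper, which proceeds by induction on $\mu(a_1)$: assuming a counterexample with truthful picks $x_1,\ldots,x_k$ and optimal manipulated picks $y_1,\ldots,y_k$, it deletes $a_1$'s first turn and the item $x_1$, proves via two coupling lemmata that in the reduced instance the manipulator can still secure all of $y_2,\ldots,y_k$ except at most one item, and uses $u(y_i)\le u(x_1)$ to contradict the induction hypothesis. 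Your tightness sketch is plausible (even $n=2$, $m=3$ with $\pi=(a_1,a_2,a_1)$ already gives ratios arbitrarily close to $2$), but as written the upper bound rests on the false ``$r-1$'' claim, so the proposal is not a complete proof.
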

\begin{proof} 
We proceed by induction on the value of parameter \(\mu(a_1)\). If \(\mu(a_1) = 1\), the bound is obvious because \(a_1\) cannot manipulate. If \(\mu(a_1) = 2\), the bound is also easy to prove because \(a_1\) will obtain only two items and the utility of each of them cannot be greater than the one of the first item \(a_1\) picks when behaving truthfully, one having a strictly lower utility. Note that \(u_T > 0\) when \(\mu(a_1) \ge 2\). Let us assume the bound true up to \(\mu(a_1) = k-1\) and let us further assume for the sake of contradiction that there exists an instance \(\mathcal{J}\) with \(\mu(a_1) = k\) where \(\max_{\succ} u(\AllocationFunction_{\succ}(a_1)) \ge 2 u_T \). Moreover, let us denote by \(x_1, \ldots, x_k\) (resp. \(y_1,\ldots, y_k\)) the items picked by \(a_1\) when behaving truthfully (resp. according to one of her best manipulation that we denote by \(\succ_b\)), where the items are ordered w.r.t. the time step at which they are picked (e.g., \(x_1\) is picked first when \(a_1\) behaves truthfully). Our hypothesis implies that:
\begin{equation} \label{hyp}
    \sum_{i=1}^k u(y_i) \ge 2\sum_{i=1}^k u(x_i) =  2 u_T.
\end{equation}

We will now show how to build from \(\mathcal{J}\) an instance \(\mathcal{J}''\) with \(\mu(a_1) = k-1\) and where \(\max_{\succ} u(\AllocationFunction_{\succ}(a_1)) \ge 2 u_T \), hence bringing a contradiction.  However, for ease of presentation this construction is decomposed into two parts: a first one where we work on an instance \(\mathcal{J}'\) obtained from \(\mathcal{J}\); and a second one where we analyse the desired instance \(\mathcal{J}''\) which is obtained from \(\mathcal{J}'\).\\

\textbf{Part 1:} Consider the instance \(\mathcal{J}'\) obtained from \(\mathcal{J}\) by removing the first occurrence of \(a_1\) in \(\PickingSequence\) (an arbitrary non-manipulator is added at the end of \(\PickingSequence\) so that the length of \(\PickingSequence\) remains \(m\)). We denote by \(t^1\) this particular time step, i.e., the one of the first occurrence of \(a_1\) in \(\PickingSequence\). We point out that in \(\mathcal{J}'\), the manipulator can manipulate to obtain the set of items \(y_2,\ldots,y_k\). Indeed, assume w.l.o.g. that items \(y_1,\ldots,y_k\) are ranked first in \(\succ_b\) (not necessarily in that order) and let \(\succ_b^{\downarrow y_1}\) be the ranking obtained from \(\succ_b\) by putting \(y_1\) in last position. Furthermore, let \(S^{\mathcal{J}}_t\) (resp. \(S^{\mathcal{J}'}_t\)) be the set of items picked at the end of time step \(t\) in instances \(\mathcal{J}\) (resp. \(\mathcal{J}'\)) when \(a_1\) follows strategy \(\succ_b\) (resp. \(\succ_b^{\downarrow y_1}\)). Then we have the following lemma.
\begin{lemma} \label{lemma: proof bound 1}
\(S^{\mathcal{J}'}_t = S^{\mathcal{J}}_t\) for \(t < t^1\) and \(S^{\mathcal{J}'}_t \subseteq S^{\mathcal{J}}_{t+1}\) for \(t \ge t^1\).
\end{lemma}
\begin{proof}[Proof of Lemma~\ref{lemma: proof bound 1}]
The first part of the lemma is obvious because the picking processes in \(\mathcal{J}\) and \(\mathcal{J}'\) are identical for \(t < t^1\). 
We prove the second part of the lemma by induction. 
Let us denote by \(\PickingSequence_{\mathcal{J}}(t)\) (resp. \(\PickingSequence_{\mathcal{J}'}(t)\)) the picker at time step \(t\) in instance \(\mathcal{J}\) (resp. \(\mathcal{J}'\)). 
Then for all \(t \ge t^1\), we have that \(\PickingSequence_{\mathcal{J}'}(t) = \PickingSequence_{\mathcal{J}}(t+1)\). 
Then, \(S^{\mathcal{J}'}_t \subseteq S^{\mathcal{J}}_{t+1}\) is true for \(t = t^1\) because \(S^{\mathcal{J}'}_{t^1-1} = S^{\mathcal{J}}_{t^1-1}\), \(S^{\mathcal{J}}_{t^1} = S^{\mathcal{J}}_{t^1-1} \cup \{y_1\}\), and hence the only item that \(\PickingSequence_{\mathcal{J}'}(t^1)\) could prefer to the one \(\PickingSequence_{\mathcal{J}}(t^{1}+1)\) picks is \(y_1\). Let us assume the inclusion relationship true up to \(t = l\). So \(S^{\mathcal{J}}_{l+1} = S^{\mathcal{J}'}_{l} \cup \{i\}\), where \(i\) is some item. Then, we obtain that \(S^{\mathcal{J}'}_{l+1} \subseteq S^{\mathcal{J}}_{l+2}\), because the only item that \(\PickingSequence_{\mathcal{J}'}(l+1)\) could prefer to the one \(\PickingSequence_{\mathcal{J}}(l+2)\) picks is item \(i\).
\end{proof}
A direct consequence of Lemma~~\ref{lemma: proof bound 1}, \(\succ_b^{\downarrow y_1}\) is successful in picking items \(y_2,\ldots,y_k\) in \(\mathcal{J}'\).\\

\textbf{Part 2:} Let us now consider the instance \(\mathcal{J}''\) obtained from \(\mathcal{J}'\) by removing \(x_1\) from the set of items as well as the last agent in the picking sequence (that we had artificially added in the first part of the proof). In \(\mathcal{J}''\) the set of items that \(a_1\) gets when behaving truthfully is \(x_2,\ldots, x_k\). Consider the preference ranking \(\succ_b^{\downarrow y_1, - x_1}\) obtained from \(\succ_b^{\downarrow y_1}\) by removing \(x_1\) and denote by \(S^{\mathcal{J}''}_t\) the set of items picked at the end of time step \(t\) in \(\mathcal{J}''\) when \(a_1\) follows strategy \(\succ_b^{\downarrow y_1, - x_0}\).  
Moreover, let \(t^l\) (resp. \(t^1\)) be the time step in \(\mathcal{J}'\) at which \(a_1\) picks \(y_l\) (resp. at which \(x_1\) is picked by some agent) when \(a_1\) uses strategy  \(\succ_b^{\downarrow y_1}\) where \(2 \leq l \leq k\).  
We show that \(a_1\) can get in \(\mathcal{J}''\) a set of items \(Y\) compounded of all items in \(\{y_2,\ldots, y_k\}\) up to one item. 
This is a consequence of the following lemma.
\begin{lemma}\label{lemma: proof bound 2}
\(S^{\mathcal{J}''}_t = S^{\mathcal{J}'}_t\) for \(t < t^1\) and \(S^{\mathcal{J}''}_t\) is of the form \((S^{\mathcal{J}'}_t\setminus \{x_1\}) \cup \{i\}\) for \(t \ge t^1\) where \(i\) is some item in \(\ItemSet\).
\end{lemma}
\begin{proof}[Proof of Lemma~\ref{lemma: proof bound 2}]
The first part of the lemma is obvious because the picking processes in \(\mathcal{J}'\) and \(\mathcal{J}''\) are identical for \(t < t^1\). 
We prove the second part of the lemma by induction. 
Let us denote by \(\PickingSequence_{\mathcal{J}'}(t)\) (resp. \(\PickingSequence_{\mathcal{J}''}(t)\)) the picker at time step \(t\) in instance \(\mathcal{J}'\) (resp. \(\mathcal{J}''\)). 
Then for all \(t\), we have that \(\PickingSequence_{\mathcal{J}'}(t) = \PickingSequence_{\mathcal{J}''}(t)\). 
Then, the fact that \(S^{\mathcal{J}''}_{t^1}\) is of the form \((S^{\mathcal{J}'}_{t^1}\setminus \{x_1\}) \cup \{i\}\)  is just due to the fact that \(S^{\mathcal{J}'}_{t^1-1} = S^{\mathcal{J}''}_{t^1-1}\). 
Let us assume this fact true up to \(t = l\). So \(S^{\mathcal{J}''}_{l} = S^{\mathcal{J}'}_{l} \setminus \{x_1\} \cup \{i\}\), where \(i\) is some item. We obtain that \(S^{\mathcal{J}''}_{l+1}\) is of the form \(S^{\mathcal{J}'}_{l} \setminus \{x_1\} \cup \{i'\}\), where \(i'\) is some item, because \(\PickingSequence_{\mathcal{J}''}(l+1)\) will pick the same item as \(\PickingSequence_{\mathcal{J}'}(l+1)\) except if this item is \(i\).
\end{proof}
Let \(j\) be the first index in \(\mathcal{J}''\) (if any) such that \(a_1\) cannot pick \(y_j\) at \(t^j \ge t^1\). Then, by Lemma \ref{lemma: proof bound 2},
we have that \(S^{\mathcal{J}''}_t = (S^{\mathcal{J}'}_t\setminus \{x_1\}) \cup \{y_{l+1}\}\) for \(t^j \le t^l \le t < t^{l+1}\). This is due to the fact that \(\succ_b^{\downarrow y_1}\) is successful in picking items \(y_2,\ldots,y_k\) in \(\mathcal{J}'\) and this proves the claim that \(a_1\) can get a set of items \(Y\) compounded of all items in \(\{y_2,\ldots, y_k\}\) up to one item. 
Hence, 
\begin{align*}
\sum_{y \in Y } u(y) &\ge \sum_{i = 1}^k u(y_i) - u(y_1) - \max_{2\le j\le k} u(y_j)\\
&\ge 2\sum_{i=1}^k u(x_i) - u(y_1) - \max_{2\le j\le k} u(y_j) \\
&\ge 2\sum_{i=2}^k u(x_i) 
\end{align*}
where the second inequality is due to Inequality \ref{hyp} and the third one is due to fact that \(u(y_i) \le u(x_1), \forall i \in \{1,\ldots,k\}\). We obtain a contradiction because \(\mu(a_1) = k-1\) in \(\mathcal{J}''\).

The tightness of the bound is provided by the instance of Example \ref{example:running} with the following utility function \(u(i_1) = u(i_2) + \epsilon = u(i_3) + 2 \epsilon = 1\) and \(u(i_4) = 0\). We recall that in this instance \(u_T = u(i_1) + u(i_4) = u(i_1)\) whereas the manipulator can obtain by manipulating the set \(S=\{i_2,i_3\}\) with utility \(u(i_2) + u(i_3) = 2u(i_1) -3\epsilon\). 
\end{proof}
We conclude from Theorem \ref{theorem : bound} that, while the increase in utility of the manipulator cannot be arbitrarily large, manipulating may often be worth it for the manipulator as doubling her utility can be a significant improvement.
\section{An Integer Programming Formulation}\label{Sect:IntProg}

In this last section, we provide an integer programming formulation of the \(\MSA\) problem. 
This integer program provides another tool than the dynamic programming algorithm of Section~\ref{Sect:positive} to solve the \(\MSA\) problem which may be more efficient for some instances. Moreover, a more thorough analysis of this formulation and of it's solution polytope may lead to new results. For instance, some bounds on the number of variables could yield new parameterized complexity results via Lenstra's theorem \cite{lenstra1983integer}.

By abuse of notation, we will identify the set \(\ItemSet\) as the set \([m] = \{1,\ldots,m\}\). Let \(x_{it}\) be a binary variable with the following meaning: \(x_{it} = 1\) iff item \(i\) is picked at time step \(t\). Then, the constraint \(\sum_{i = 1}^m x_{it} = 1\) implies that there is exactly one item being picked at time step \(t\). Similarly, the constraint \(\sum_{t = 1}^m x_{it} = 1\) implies that item \(i\) is picked at exactly one time step. 
Lastly, consider a time step \(t\) such that \(\PickingSequence(t) \neq a_1\). Then the following set of constraints implies that \(\PickingSequence(t)\) picks at time step \(t\) the best item available for her.
\begin{equation}
x_{it} + \sum_{j \succ_{\PickingSequence(t)} i} x_{jt} + \sum_{t' < t} x_{it'} \geq 1, \quad \forall i \in [m] \notag
\end{equation}
In words, this constraint says that: if \(\PickingSequence(t)\) does not pick item \(i\) at time step \(t\) (i.e., \(x_{it} = 0\)), it is either because she picks something that she prefers (i.e., \(\sum_{j \succ_{\PickingSequence(t)} i} x_{jt} = 1\) ), or because item \(i\) has already been picked at an earlier time step (i.e., \(\sum_{t' < t} x_{it'} = 1\)). To summarize, we obtain the following integer programming formulation\footnote{We note that we obtain an integer programming formulation which is very close to one of the mathematical program used to solve stable marriage problems \cite{gusfield1989stable}.} with \(m^2\) binary variables and at most \(m^2\) constraints (as we can assume the manipulator is the picker for at least two time steps), 
\begin{align*}
    \max_{x_{it}} \sum_{i=1}^m\sum_{t : \PickingSequence(t) = a_1} x_{it} u(i)& \\
    \text{s.t. \hspace{1.cm}}    \sum_{t = 1}^m x_{it} &= 1, \quad \forall i \in [m]\\
    \sum_{i = 1}^m x_{it} &= 1, \quad \forall t \in [m]\\
    x_{it} + \sum_{j \succ_{\PickingSequence(t)} i} x_{jt} + \sum_{t' < t} x_{it'} &\geq 1, \quad \forall i,t \in [m]^2 \text{ s.t. } \PickingSequence(t) \neq a_1 \\
    x_{it} &\in \{0,1\}, \quad \forall i,t \in [m]^2 
\end{align*}

\section{Conclusion and Future Work}

We have provided a variety of results on the problem of finding an optimal manipulation in the sequential allocation protocol. Beside an integer program to solve this problem, we have designed a dynamic programming algorithm from which we have derived several positive parameterized complexity results. For instance, we have shown that this manipulation problem is in XP with respect to the number of agents and that it is in FPT with respect to the maximum range of an item. Conversely, we have also provided matching W[1]-hardness results. 
Lastly, motivated by the fact that agents could be inclined to behave truthfully if a manipulation would not be worth it, we have investigated an upper bound on the increase in utility that the manipulator could get by manipulating. We have showed that the manipulator cannot increase the utility of her bundle by a factor greater than or equal to 2 and that this bound is tight. Overall, our results show that, not only sequential allocations are worth manipulating, but also that they can be manipulated efficiently for a wide range of instances.

Several directions for future work are conceivable. One could try to decrease our upper bound on the increase in utility that the manipulator can obtain by restricting to specific instances (e.g., imposing a specific type of picking sequences). Moreover, it would be worth investigating the price of manipulation, i.e., the worst case ratio between the social welfare when one agent manipulates, all the others being truthful, and the one when all the agents behave truthfully. On this issue, to the best of our knowledge, little is known except for some preliminary results by Bouveret and Lang~\cite{bouveret2014manipulating}. \\

\textbf{Acknowledgments.} We are grateful to J\'er\^ome Lang and Paolo Serafino for helpful comments and stimulating conversations on this work.

\bibliographystyle{alpha}
\bibliography{references}

\begin{thebibliography}{KNWX13}

\bibitem[ABLM17]{aziz2017complexity}
Haris Aziz, Sylvain Bouveret, J{\'{e}}r{\^{o}}me Lang, and Simon Mackenzie.
\newblock Complexity of manipulating sequential allocation.
\newblock In {\em Proceedings of the Thirty-First {AAAI} Conference on
  Artificial Intelligence, February 4-9, 2017, San Francisco, California,
  {USA.}}, pages 328--334, 2017.

\bibitem[AGW17]{aziz2017equilibria}
Haris Aziz, Paul Goldberg, and Toby Walsh.
\newblock Equilibria in sequential allocation.
\newblock In {\em Algorithmic Decision Theory - 5th International Conference,
  {ADT} 2017, Luxembourg, Luxembourg, October 25-27, 2017, Proceedings}, pages
  270--283, 2017.

\bibitem[AKWX16]{aziz2016welfare}
Haris Aziz, Thomas Kalinowski, Toby Walsh, and Lirong Xia.
\newblock Welfare of sequential allocation mechanisms for indivisible goods.
\newblock In {\em {ECAI} 2016 - 22nd European Conference on Artificial
  Intelligence, 29 August-2 September 2016, The Hague, The Netherlands -
  Including Prestigious Applications of Artificial Intelligence {(PAIS} 2016)},
  pages 787--794, 2016.

\bibitem[AWX15]{aziz2015possible}
Haris Aziz, Toby Walsh, and Lirong Xia.
\newblock Possible and necessary allocations via sequential mechanisms.
\newblock In {\em Proceedings of the Twenty-Fourth International Joint
  Conference on Artificial Intelligence, {IJCAI} 2015, Buenos Aires, Argentina,
  July 25-31, 2015}, pages 468--474, 2015.

\bibitem[BC07]{budish2012multi}
Eric Budish and Estelle Cantillon.
\newblock Strategic behavior in multi-unit assignment problems: Theory and
  evidence from course allocations.
\newblock 2007.

\bibitem[BCM16]{bouveret2016fair}
Sylvain Bouveret, Yann Chevaleyre, and Nicolas Maudet.
\newblock Fair allocation of indivisible goods.
\newblock In {\em Handbook of Computational Social Choice}, pages 284--310.
  2016.

\bibitem[BEF03]{brams2003fair}
Steven~J Brams, Paul~H Edelman, and Peter~C Fishburn.
\newblock Fair division of indivisible items.
\newblock {\em Theory and Decision}, 55(2):147--180, 2003.

\bibitem[BL11]{bouveret2011general}
Sylvain Bouveret and J{\'{e}}r{\^{o}}me Lang.
\newblock A general elicitation-free protocol for allocating indivisible goods.
\newblock In {\em {IJCAI} 2011, Proceedings of the 22nd International Joint
  Conference on Artificial Intelligence, Barcelona, Catalonia, Spain, July
  16-22, 2011}, pages 73--78, 2011.

\bibitem[BL14]{bouveret2014manipulating}
Sylvain Bouveret and J{\'{e}}r{\^{o}}me Lang.
\newblock Manipulating picking sequences.
\newblock In {\em {ECAI} 2014 - 21st European Conference on Artificial
  Intelligence, 18-22 August 2014, Prague, Czech Republic - Including
  Prestigious Applications of Intelligent Systems {(PAIS} 2014)}, pages
  141--146, 2014.

\bibitem[BSJ79]{brams1979prisoners}
Steven~J Brams and Philip~D Straffin~Jr.
\newblock Prisoners' dilemma and professional sports drafts.
\newblock {\em The American Mathematical Monthly}, 86(2):80--88, 1979.

\bibitem[Che52]{chebyshev1852memoire}
Pafnuty~Lvovich Chebyshev.
\newblock M{\'e}moire sur les nombres premiers.
\newblock {\em Journal de Math{\'e}matiques Pures et Appliqu{\'e}es},
  17:366--390, 1852.

\bibitem[ET41]{erdos1941problem}
Paul Erd{\"o}s and P{\'a}l Tur{\'a}n.
\newblock On a problem of sidon in additive number theory, and on some related
  problems.
\newblock {\em Journal of the London Mathematical Society}, 1(4):212--215,
  1941.

\bibitem[FHRV09]{fellows2009parameterized}
Michael~R Fellows, Danny Hermelin, Frances Rosamond, and St{\'e}phane Vialette.
\newblock On the parameterized complexity of multiple-interval graph problems.
\newblock {\em Theoretical Computer Science}, 410(1):53--61, 2009.

\bibitem[GI89]{gusfield1989stable}
Dan Gusfield and Robert~W Irving.
\newblock The stable marriage problem.
\newblock {\em Structure and Algorithms}, 1989.

\bibitem[Jr.83]{lenstra1983integer}
Hendrik W.~Lenstra Jr.
\newblock Integer programming with a fixed number of variables.
\newblock {\em Math. Oper. Res.}, 8(4):538--548, 1983.

\bibitem[KNW13]{kalinowski2013social}
Thomas Kalinowski, Nina Narodytska, and Toby Walsh.
\newblock A social welfare optimal sequential allocation procedure.
\newblock In {\em {IJCAI} 2013, Proceedings of the 23rd International Joint
  Conference on Artificial Intelligence, Beijing, China, August 3-9, 2013},
  pages 227--233, 2013.

\bibitem[KNWX13]{kalinowski2013strategic}
Thomas Kalinowski, Nina Narodytska, Toby Walsh, and Lirong Xia.
\newblock Strategic behavior when allocating indivisible goods sequentially.
\newblock In {\em Proceedings of the Twenty-Seventh {AAAI} Conference on
  Artificial Intelligence, July 14-18, 2013, Bellevue, Washington, {USA.}},
  2013.

\bibitem[LMMS04]{lipton2004approximately}
Richard~J. Lipton, Evangelos Markakis, Elchanan Mossel, and Amin Saberi.
\newblock On approximately fair allocations of indivisible goods.
\newblock In {\em Proceedings 5th {ACM} Conference on Electronic Commerce
  (EC-2004), New York, NY, USA, May 17-20, 2004}, pages 125--131, 2004.

\bibitem[LS12]{levine2012make}
Lionel Levine and Katherine~E. Stange.
\newblock How to make the most of a shared meal: Plan the last bite first.
\newblock {\em The American Mathematical Monthly}, 119(7):550--565, 2012.

\bibitem[TTY16]{tominaga2016manipulations}
Yuto Tominaga, Taiki Todo, and Makoto Yokoo.
\newblock Manipulations in two-agent sequential allocation with random
  sequences.
\newblock In {\em Proceedings of the 2016 International Conference on
  Autonomous Agents {\&} Multiagent Systems, Singapore, May 9-13, 2016}, pages
  141--149, 2016.

\bibitem[Wal16]{walsh2016strategic}
Toby Walsh.
\newblock Strategic behaviour when allocating indivisible goods.
\newblock In {\em Proceedings of the Thirtieth {AAAI} Conference on Artificial
  Intelligence, February 12-17, 2016, Phoenix, Arizona, {USA.}}, pages
  4177--4183, 2016.

\bibitem[XL19]{DBLP:journals/corr/abs-1909-06747}
Mingyu Xiao and Jiaxing Ling.
\newblock Algorithms for manipulating sequential allocation.
\newblock {\em CoRR}, abs/1909.06747, 2019.

\end{thebibliography}

\end{document}